\author{Anirban Das, Anna Levina}
\title{Critical neuronal models with relaxed timescales separation}
\theoremstyle{plain}
\newtheorem{thm}{Theorem}[section]
\newtheorem{lem}[thm]{Lemma}
\newtheorem{definition}{Definition}[section]
\newtheorem*{remark}{Remark}
\theoremstyle{definition}
\theoremstyle{remark}
\newcommand{\rom}[1]{%
  \textup{\uppercase\expandafter{\romannumeral#1}}%
}
\begin{document}
	
%
%
%
%

\title{Critical neuronal models with relaxed timescales separation}
	\author{Anirban~Das$^{1}$,  Anna~Levina$^{2,3,4}$\\
{\small $^{1}$Department of Mathematics, Pennsylvania State University, USA}\\
{\small $^{2}$University of T\"ubingen, T\"ubingen, Germany}\\
{\small $^{3}$Max Planck Institute for Biological Cybernetics, T\"ubingen, Germany } \\
{\small $^{4}$Institute of Science and Technology Austria, Klosterneuburg, Austria} 
}
\maketitle
\begin{abstract}
  Power laws in nature are considered to be signatures of complexity. The theory of self-organized criticality (SOC) was proposed to explain their origins. A longstanding principle of SOC is the \emph{separation of timescales} axiom.   It  dictates that external input is delivered to the system at a much slower rate compared to the timescale of internal dynamics. The statistics of neural avalanches in the brain was demonstrated to follow a power law, indicating closeness to critical state. Moreover, criticality was shown to be a beneficial state for various computations leading to the hypothesis, that the brain is a SOC system. However, for neuronal systems that are constantly bombarded by incoming signals, separation of timescales assumption is unnatural. Recently it was experimentally demonstrated that a proper correction of the avalanche detection algorithm to account for the increased drive during task performance leads to a change of the power-law exponent from $1.5$ to approximately $1.3$, but there is so far no theoretical explanation for this change. Here we investigate the importance of timescales separation, by partly abandoning it in various models. We achieve it by allowing for external input during the avalanche, without compromising the separation of avalanches. We develop an analytic treatment and provide numerical simulations of a simple neuronal model. If the input strength scales as one over network size we call it a moderate input regime.  In this regime, a scale-free behavior is observed i.e. the avalanche size follows a $1.25$ power law, independent on the exact size of the input. In contrast for a perfectly timescales separated system an exponent of $1.5$ is observed. Thus the universality class of the system is changed by the external input, and the change of the exponent is in a good agreement with experimental observation from non-human primates.  We confirm our analytical findings by simulations of the more realistic branching network model.

\end{abstract}
PACS numbers :87.18.Sn, 89.75.Da, 89.75.Fb, 05.65.+b

\section{ Introduction}
A  variety of natural system provide  observations that follow  power-law statistics, possibly with exponential cutoff~\cite{Gutenberg1956,sornette2006critical, Beggs2003}. For example  a power law distribution for activity propagation cascades (so-called {\it neuronal avalanches})  was reported in a myriad of neuronal systems including  cortical slices from rats~\cite{Beggs2003}, dissociated cultures~\cite{Levina2017}, \emph{in vivo} recordings in monkeys ~\cite{Petermann2009}, and humans~\cite{Priesemann2013, tagliazucchi2012criticality, shriki2013neuronal}.  In many cases, the appearance of the power-law statistic is connected with closeness to the critical point of a second order (continuous) phase transition. For the brain, the claim that power law observation point to the closeness to critical states was additionally supported by the observation of stable exponents relations~\cite{klaus2011statistical}, and shape collapse~\cite{friedman2012universal}. Models of criticality therefore began being used for  studying the brain. Additional reason for it comes from observations that criticality  brings about optimal computational capabilities ~\cite{langton1990computation, bertschinger2004}, optimal transmission and storage of information ~\cite{boedecker2012}, and sensitivity to sensory stimuli~\cite{Shew2009, Kinouchi2006}. In spite of this many facets of criticality, models often prove to be incompatible to the specific natural incident. Reconciling these two points has consumed significant effort and sparked wide debates in the neuronal community~\cite{beggs2012being, Berard_2006}.
	
 A concept of self-organized criticality (SOC) was proposed~\cite{Bak1987} as a unified mechanism for positioning and keeping systems close to criticality. SOC models have emerged as the flagship vehicle for modeling criticality as an operational
state of the brain network because it eliminates the necessity of endogenously tuning the system to criticality. For a system consisting of many interacting non-linear units, the general theory prescribes conditions necessary for exhibiting self-organized criticality.  Firstly, it should obey local energy conservation rules~\cite{Bonachela2009} and secondly, the timescale of the external drive should be separated from the timescale of interactions. It implies that no external input is delivered to the system before it reaches a stable configuration. The intuition behind the timescales separation condition can be summarized as follows: consider, there is a macroscopic scale at which the external energy is applied and a microscopic scale for  activity propagation through the interacting units. When the two scales are comparable, the frequency of the drive becomes a factor that can be tuned by some moderating party. In the limit, as the frequency of macroscopic events implodes to zero, global supervision ceases, and a self-organized system emerges~\cite{pruessner2012self, Dikman2000, dickman1998self, watkins201625}. 

  The first models of SOC in neuronal networks~\cite{Eurich2002, Herz_1995b} preceded the neuronal experiments. After experimental confirmation, further models for neuronal avalanches were developed~\cite{Arcangelis06, scarpetta2014alternation, Uhlig2013, diSanto2018landau}. Most of them included some local energy conservation, with a few rare exceptions such as the leaky integrate-and-fire neuronal model~\cite{Millman2010}.  This last model also did not have a timescales separation, there the definition of an avalanche relied heavily on the known connectivity in the network. Recently it was demonstrated~\cite{Martinello2017} that  the classical procedure of binning  will not reveal any critical statistics for this model, and neutral theory could explain the observed power-laws. The usage of binning for data-analysis from neuronal recordings~\cite{Priesemann2014,Beggs2003} implicitly relies on the assumption of timescales separation.  However, in neuronal systems inputs are constantly present and there is no chance for a strict separation of external input from the internal dynamics.  
	
We investigate here how the relaxation of the timescales separation condition by an input process influences avalanche size distribution.  An additional input to the system generally has two effects: first, the avalanches increase due to the input and follow-up firing;  second, the avalanches are  ``glued together'' namely, the input connects avalanches that would have otherwise occurred  separately. For the systems that are driven by a constant input, the definition of criticality is possible by the estimation of the branching ratio~\cite{wilting2016branching}. However, in this case typical binning-based avalanche analysis might not reveal a critical state because both aforementioned effects are present simultaneously and separating the avalanches becomes impossible.  A recent study~\cite{yu2017maintained} numerically demonstrated that changing the binning according to the firing rate reveals critical dynamics  during  task performance, when additional input on top of ongoing activity is expected. Here we investigate analytically how criticality can be preserved even if external drive is added to the system.

	As a first step towards the understanding of timescales separation, we allow for external input during avalanches without compromising their separation.  We develop an analytic treatment and provide numerical simulations of  simple neuronal models. We show that the power law scaling feature is preserved, however, even  moderate external input leads to a change in the slope of the avalanche size distribution. The same critical exponents are persistent throughout a range of values of the input.  Therefore we prove that the rate of input is not taking the role of a tuning parameter.

\section{ Models}

For our analytical and numerical investigations, we will use the following two models. The {\it Branching Model} (BM)~\cite{branching, PhysRevLett.94.058101} is a standard model to study an abstract signal propagation that serves as a simplified model for neuronal avalanches.  For our studies, we equip the standard BM with an additional
input process during avalanches.  Unfortunately BM does not allow for a complete analytic description. To overcome this difficulty, we introduce a simpler {\it Levels Model} (LM). We carry out a rigorous mathematical study of LM, and then check in simulations that similar results hold for BM. In the limit, as system size grows to infinity, both LM and BM are well approximated by branching processes~\cite{Levina2007d, LevinaDiss}.

\subsection{ The Branching Model}
The {\it Branching Model} (BM) consists of $N$ neurons connected into Erd\H{o}s-R\'{e}nyi random graph with probability of  connection  $p^{\mathrm{conn}}$. This model was used for studying benefits of criticality in neural systems~\cite{Kinouchi2006} and later was employed in many modeling investigations of neuronal avalanches~\cite{Haldeman2005,Shew2009, Levina2017,Levina2006b}.  Every edge in the network is assigned a weight $p_{ij} = \sigma/(p^{\mathrm{conn}} N)$. As a result the average sum of all outgoing weights equals $\sigma$. Each node denotes a neuron that can be in one of $n$ states, $c_i(t)$ denotes the state of the $i$-th node at the time  $t$:  $c_i =0$  indicates a resting state, $c_i =1$  is the active state, and $c_i = 2, \dots ,n-1$ are the refractory states. All states except for the active state are attained by the deterministic dynamics: if $0<c_i(t)<n-1$ then $c_i(t+1) = c_i(t) +1$, and if $c_i(t) = n-1$  then $c_i(t+1) = 0$.  

For every node $i$, the excited state $c_i(t)=1$ can  be  reached only from the resting state  $c_i(t-1)=0$ in one of the following circumstances: (1)  If a neighbor $j$ is active at time $t-1$ then with probability $p_{ji}$ $i$ will get activated at time $t$;  (2) If there are active nodes in the network, the node $i$ can receive  an external stimulus with probability ${\phi}/{N}$. 
The condition on the activity in the network in (2) is a major difference to previously studied models~\cite{Kinouchi2006}. It allows to keep the avalanche separation intact while introducing an external input during the avalanche.  We initiate the network with one random node set to the active state and the remaining nodes in the resting state and observe activity propagation ({\it avalanche}). If at a particular time step no units are in the active state, the activity propagation or avalanche is considered to have terminated and new avalanche is started. We record the distribution for activity propagation sizes  (measured in the number of activations during one avalanche), and durations (measured as the time-steps taken until activity dies out).

 It was shown~\cite{Kinouchi2006} that in the model without input, a network can exhibit different dynamical regimes depending on the value of  parameter $\sigma$ (called branching parameter): when $\sigma < 1$ the activity dies out exponentially fast, for $\sigma > 1$ there is a possibility for indefinite activity propagation. In the critical regime, obtained for  $\sigma = 1$, activity propagation size $s$ is distributed as a power-law with exponent $1.5$. However, until now it was not known, what effect additional inputs have  on these distributions.  
   
\subsection{ The Levels Model}

\begin{figure}[ht!]
	\includegraphics[width=85mm]{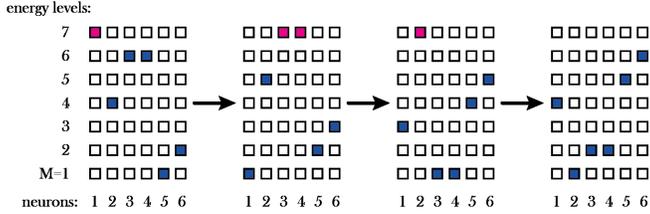}
	\caption{%
		Schematic representation of the levels model without external input, for $N=6$ neurons with $M=7$ energy levels. The avalanche size is $4$, avalanche durations is $3$.	
		\label{Fig:scheme}
	}
\end{figure}

The {\it Levels Model} (LM) without input is inspired by the simple network model of perfect integrators~\cite{Eurich_2002}. The neuronal avalanches produced by the model were shown to exhibit critical, subcritcal, and supracritical behavior depending on the control parameter, similar to experimental observations in cortical slices and cultures~\cite{Beggs2003}. The different modifications of LM were extensively studied mathematically~\cite{Levina2014, Denker2016, denker2014ergodicity, levina2008mathematical}. The version  used here was introduced in the context of dynamical systems to prove ergodicity of avalanche transformations~\cite{denker2014ergodicity} (see Appendix A). The main difference between the original biophysical model~\cite{Eurich_2002} and LM is that the later does not allow self-connections. However, when parameters are re-scaled to accommodate for changed connectivity, distributions of avalanche sizes and  durations are same in both models.

The LM consists of a fully-connected network of $N$ units, each unit $j$ is described by its energy level $E_j \in \{1,\ldots, M\}$. Connections are defined such that receiving one input  changes the energy level by $1$. In the language of neuronal modeling, $E_j$ is the membrane potential and connection strength is set to  $1$. If neuron $j$ reaches threshold level $M$, it fires a spike and then we reset it: $E_j \mapsto 1$. All neurons $k$ that  are connected to $j$ such that $E_k < M$ are updated: $E_k \mapsto E_k + 1$. After firing the spike, a neuron is set to be refractory until activity propagation is over. We initialize the model by randomly choosing energy levels of all neurons from independent copies of a uniform distribution on $[1,M]$.

After initialization, all neurons in the energy level $M$ spike, followed by dissemination of energy. If as a result more neurons reach the level $M$, then they are in turn discharged and so on,  until the activity stops. This propagation of activity we call an avalanche and the number of neurons fired is its size. The progression of the avalanche in a  system with $N=6$, and $M=7$  is demonstrated in Fig.~\ref{Fig:scheme}. 

We introduce external input to be proportional to the size of the activity propagation without input. This more sophisticated version of LM  proves to be also mathematically tractable (see Appendix B). 
If $o$ is the number of neurons fired in an avalanche, we additionally activate  $r$ among the remaining $N-o$ neurons.  Here $r$ is a random number drawn from a binomial distribution $B(o,\phi)$. The parameter $\phi \in [0,1)$ represents the rate of the external input i.e., $\phi$ is the average number of inputs delivered during an avalanche of size $1$. After these $r$ additional firings more neurons may reach the energy level $M$, resulting in a second cascade of firings. The process will stop after a maximum of $N$ discharges because no neuron is allowed to fire twice.
We study the dependence of the avalanche size distribution on the strength of the input. We use
$A_{N,\phi}$ to denote the random variable that counts the avalanche size. When $\phi=0$, we have the no-external input regime. Our model possess an Abelian property, namely it does not matter in which order to discharge neurons, size of the avalanche will be the same regardless. This property allows us to introduce external input in such a simple form.

\section{ Results and interpretation}
\subsection{ Impact of input  in LM}

\begin{figure}
	\includegraphics[width=0.5\textwidth]{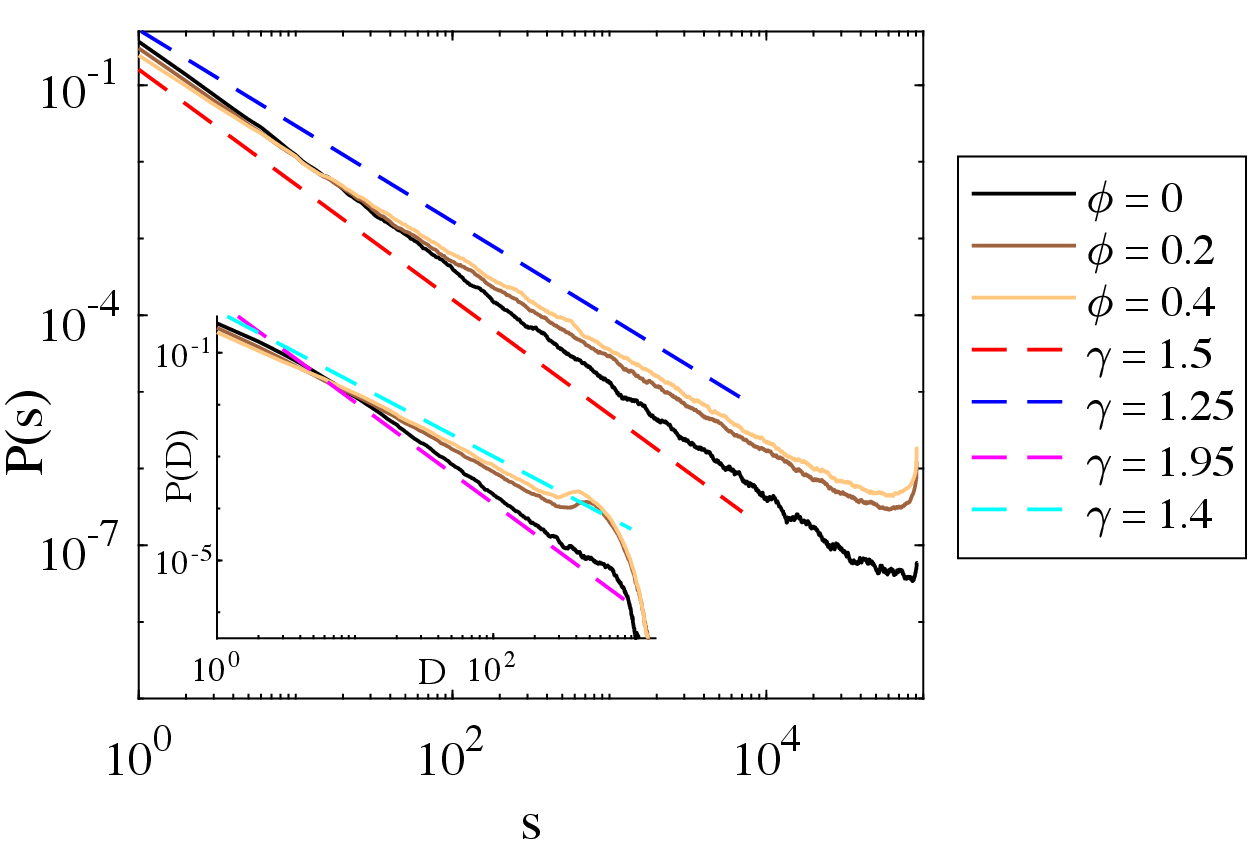}
	\caption{%
		Avalanche size distributions in LM with various input strengths. Inset shows corresponding durations distributions that also change their exponent. Input strength $\phi$ and the power-law exponents of the lines are indicated in the legend. $N = M = 10^5$.
		\label{LM_size_dur}
	}
\end{figure}

In the \enquote{no external input regime}, critical behavior is observed when $M=N$. In this case the avalanche size probability scales as a power-law i.e., $P (A_{N,0}=k) \sim C_1 k^{-1.5} $~\cite{denker2014ergodicity}.  For the rest of the article we consider $M=N$, which still serves as the critical value of the parameter in the \enquote{driven} case, with $\phi>0$.

We let $o$ denote the size of the avalanche that would have been observed without external drive, then there will be on average  $o \times \phi$ inputs. When $\phi= o(1/N)$, we can show analytically that $P (A_{N,\phi}=k) \sim C_2 k^{-1.5}$. This is the {\it small input} regime, the perturbation of the system is not strong enough to induce significant changes in the dynamics. This result demonstrates the  stability of the classical models. At the other end of the spectrum, we could  force a  fraction of the neurons to fire as a result of external input. Thus $\phi = \Theta(N)$, where $\Theta$ is taken as in the Bachmann-Landau notation~\footnote{$f(n)=\Theta(g(n))$ if $\exists k_{1}>0,\;\exists k_{2}>0,\;\exists n_{0}$ such that $\forall n>n_{0}$ we have $k_{1}\cdot g(n)\leq f(n)\leq k_{2}\cdot g(n)$}.
 In such a case we can show that $A_{N,\phi}$ converges in distribution to a  normal variable, as $N \to \infty$ (see Appendix C). Essentially,  the immense external input in this regime (named the {\it large input }regime) has reduced the neuronal activity to \enquote{noise}.

The most interesting case is the {\it moderate input } regime, where  $\phi = \Theta(1)$. In this case we can mathematically derive (see Theorem B.\ref{med_input_form2_main_asymptotic_result}) the following  result as $k$ grows to infinity: 
\begin{equation}\label{main_asymt_result}
P (A_{N,\phi}=k) \; \sim \;C_3 k^{-1.25} .
\end{equation}

We verified \eqref{main_asymt_result} by simulating a  finite LM with $N = 10^5$ neurons, and  inputs of varying   strength. As expected, the $1.5$ power-law is transformed by the input into the $1.25$ power-law (see Fig.~\ref{LM_size_dur}). Also, we numerically test  the avalanche duration distribution i.e the number of time-steps during an avalanche. Both observables deviate from the power-law in the very tail because of  finite system size and  restriction on double activation. Except for this deviations, numerical simulations support analytic results. 
 
In the moderate input regime, in spite of the compromised timescales separation, power-law scaling is preserved for both  avalanche size and duration distributions. However the power law  exponent is changed. Surprisingly, as long as $\phi = \Theta(1)$ the power-laws scaling is preserved and limiting exponent remains equal to $1.25$.   This means $\phi$  does not need to be externally tuned to achieve criticality.

\subsection{ Finite size effects and numerical simulations for LM} 
A scaling relationship given by Eq.~\ref{main_asymt_result} is valid for any  given $\phi$ if $N$ and $k$ are both large enough, and $k/N$ is small enough.  To define a more precise parameter relationship that will allow us to test results in simulations, we devise sufficient  but not necessary condition for Eq.~\ref{main_asymt_result} to hold. We require $k$ to satisfy
\begin{equation}\label{condition_on_k}
N \; \ge \; k^2. 
\end{equation}
And  we require $\phi$ to satisfy for some positive $\delta$, 
\begin{equation}\label{N_and_phi}
e^{-\left(\phi \log (N)\right)^2} \;\le \; N^{-.5-\delta} .
\end{equation}
For any $N$ and $\phi$ satisfying Eq.~\ref{N_and_phi} if $k$ is small,  we get $P( A_{N,p}=k) \sim C_4k^{-1.5}$ 
(same as for no input systems) , as $k$ grows larger we get $P( A_{N,p}=k) \sim C_3k^{-1.25}$, indicating multifractal behavior~\cite{harte2001multifractals}. 
 Approximating the stochastic input by its average,  we show that as long as $k \le \phi^{-2}$  we have $ P( A_{N,p}=k) \sim C_4k^{-1.5}$.
The simple intuition behind the multifractal behaviour is that for very small avalanches, there is substantial probability not to receive any external inputs. Thus the $1.5$ power law characteristic of traditional models with a separation of timescales is still visible.

We simulate the LM for different input strength and observe a good agreement with our analytic results, Fig.~\ref{BM_LM} solid lines. Aberrant behavior  for large avalanche sizes is due to the finite size of the system and the imposed condition that no avalanche can be larger than the system size. Theoretical prediction for the onset of the $1.25$ power-law scaling is indicated by the  magenta line, this too  is in  good agreement with numerical observations. 

\subsection{ Branching Model with input}
A BM without external input corresponds to the situation where $\phi=0$, in such a scenario the probability distribution for avalanches follow a $1.5$  power law~\cite{branching}. Here we will discuss what changes in the avalanche size distribution upon adding a moderate input.
A useful characteristic of the LM is that the avalanche can be separated into two stages, an original avalanche (pre-avalanche) and the aftershock avalanche that is triggered by external inputs. Although this feature makes the LM analytically tractable, it also makes its construction seem contrived. In contrast, in the BM external input is added at a fixed rate during the avalanches, while keeping the separation between the avalanches intact. 

In the {\it moderate input} regime, for any suitable strength of the external signal, the exponent changes from $1.5$ to $1.25$, Fig.~\ref{BM_input}. For large avalanches, finite size effects observed previously in the LM are enhanced by the possibility for the system to get additional external input during the aftershock.

\begin{figure}
	\includegraphics[width=0.5\textwidth]{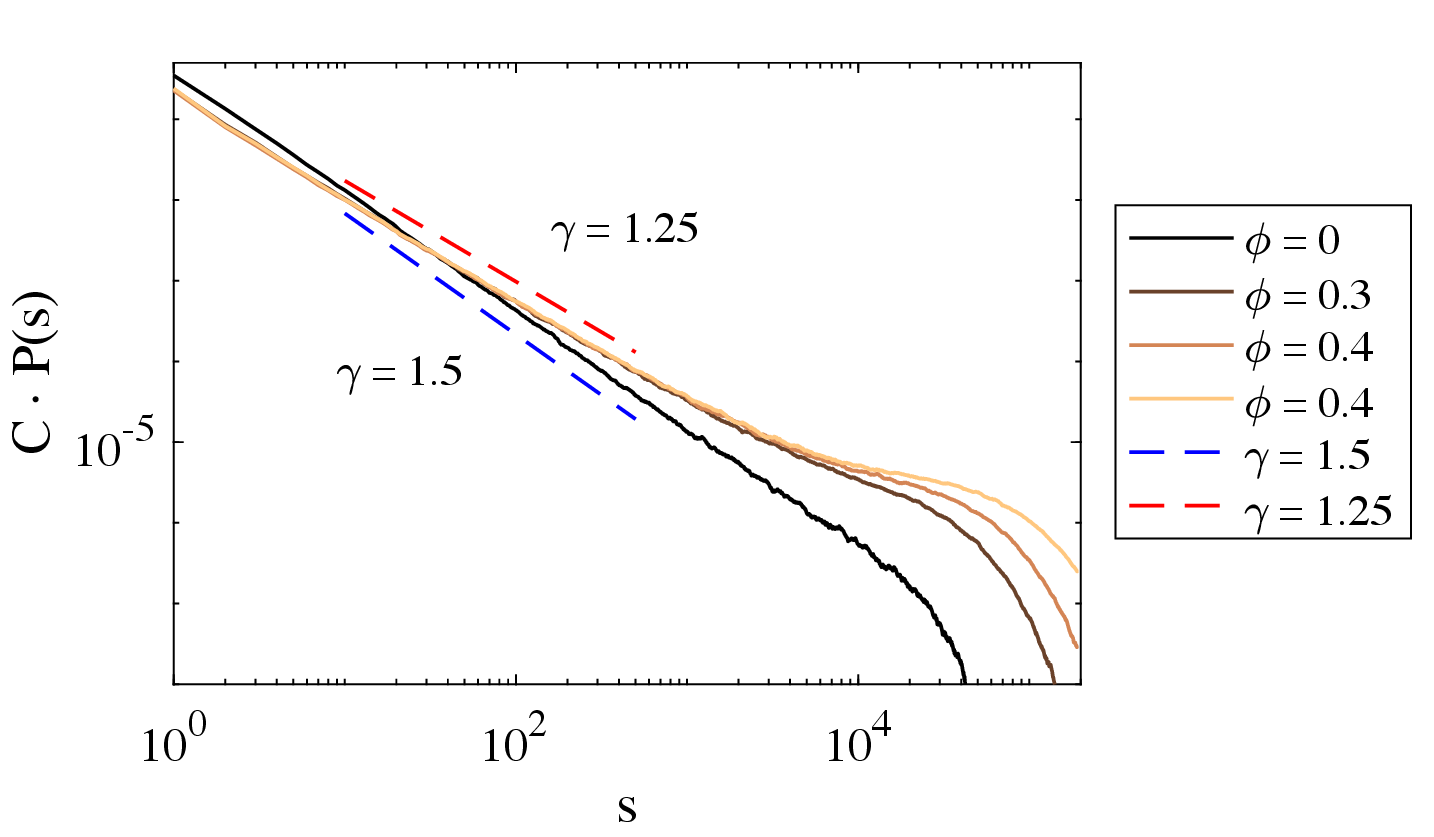}
	\caption{%
		Avalanche size distributions in the branching model with various input strengths. Input strength $\phi$ is indicated in the legend. $N = 10^5$, $n=10$, $\sigma = 1$.  Distributions for $\phi >0 $ are shifted such that they all coincide for $s = 100$.
		\label{BM_input}
	}
\end{figure}

For the BM, the input is delivered at a constant rate and is thus proportional to the duration of the pre-avalanche, while in LM the input is proportional to the size of the pre-avalanche. However, both systems show very similar avalanche size distributions for various input intensities, Fig.~\ref{BM_LM}. Let $t_r$ denote the transition time between the power-law with exponent $1.5$ and the power-law with exponent $1.25$. We observe that $t_r$ for the BM is roughly the same as for Lm, where we had analytic arguments  showed $t_r\approx \phi^{-2}$, Fig.~\ref{BM_LM}. 

\begin{figure}
	\includegraphics[width=0.5\textwidth]{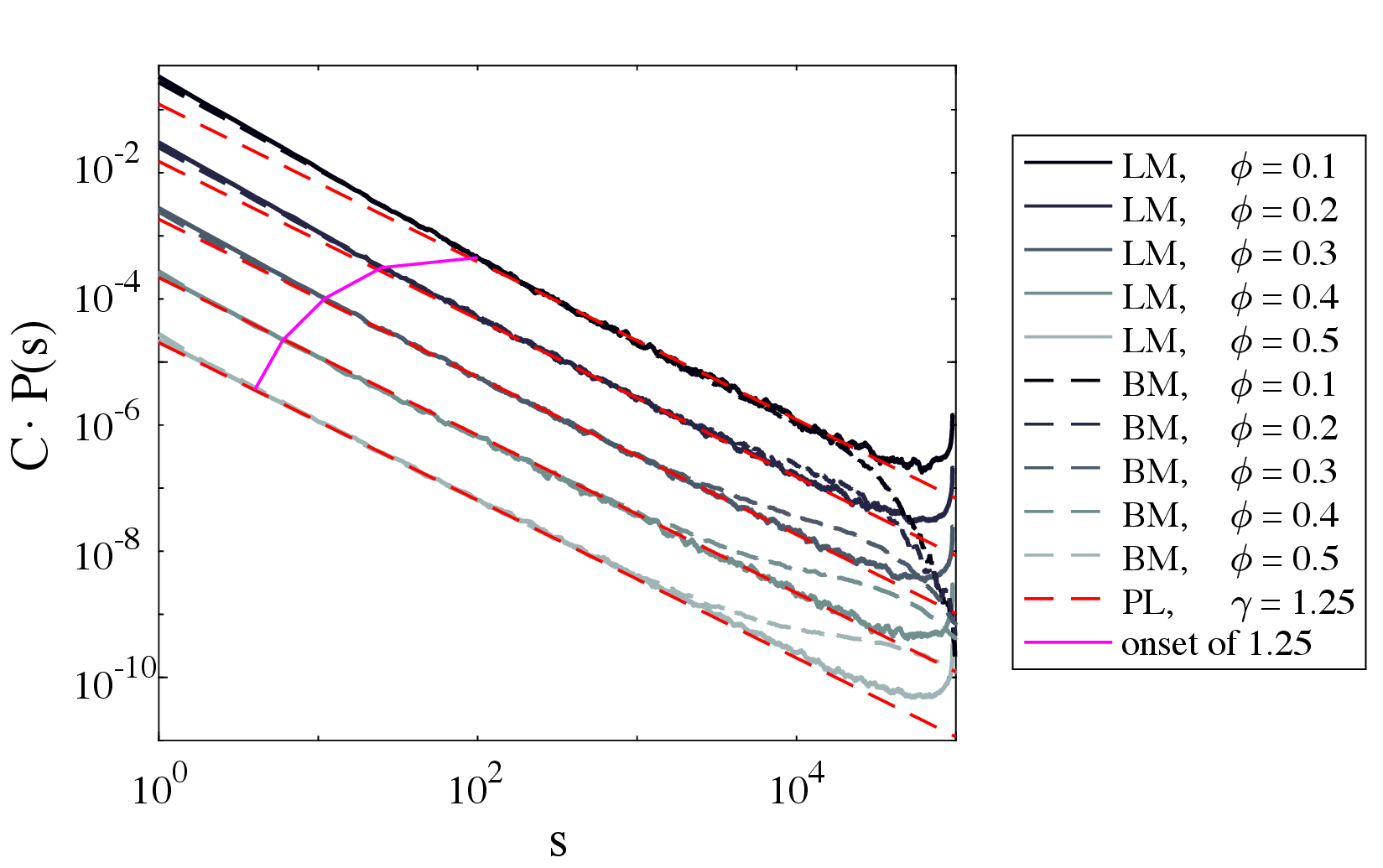}
	\caption{%
		Avalanche size distributions in the levels model and branching model with various input strengths.   Input strength $\phi$ and the model type are indicated in the legend. Magenta line indicates the analytic prediction for the onset of the power-law with exponent $-1.25$. For both models, we take $N=10^5$. To improve visibility, the distributions are shifted by multiplication with $c_{\phi} = 10^ {-10 \phi +1}$.
		\label{BM_LM}
	}
\end{figure}

\section{ Conclusion}


Models exhibiting criticality are classified into universality classes based on  power-law exponents. Quantitative characteristics of various {emergent properties} in critical systems belonging to the same universality class are found to be similar (see~\cite{christensen2005complexity}). 
 By introducing external input to models from the $1.5$ exponent universality class, we have changed them to models  with characteristic  power law exponent $1.25$. Although a $1.25$ exponent is  more seldom than the ubiquitous $1.5$ exponent, the former has been observed in several models. For example in models for {slow crack growth in heterogeneous materials}~\cite{bonamy2008crackling},  {driven elastic manifolds in disordered media}~\cite{laurson2010avalanches}, {fracturing processes under annealed disorder}~\cite{caldarelli1996self}, {mesomodels of amorphous plasticity}~\cite{PhysRevE.84.016115}, and {randomly growing networks}~\cite{mossa2002truncation}.


 Recent experimental study~\cite{yu2017maintained} has shown that task-related cortical activity is comprised of neuronal avalanches with exponent very close to our prediction. By carefully  accounting for the increase in firing rate during task performance the power law was observed to change from $1.5$ to $1.3$. The observed increase in the firing rate and nLFP rate in the premotor cortex during the task performance can be attributed to the increased input from the sensory and higher areas needed for motor planning.  There is a significant difference between models we consider and data analysis from the  recordings. Whereas in our case the ground truth about splitting the activity into avalanches is known, for the recorded data the binning procedure influences the split significantly. However, closeness of exponents obtained from interpretation of experimental recordings  to our analytic results suggests that the  adapting binning procedure is a right choice to capture underlying dynamics.

 We demonstrated that for input driven system a multi-fractal characteristics of the avalanche size distribution can be expected. Indeed, analytic approximations show that the $1.5$ power-law exponent known from the spontaneous activity analysis persists when $k \le \phi^{-2}$. This  was not seen in~\cite{yu2017maintained}, where the data collapsed to a single power law. On one hand this discrepancy could come from the difference in avalanche detection mechanisms. On the other hand it maybe that  a large input during the task  shifts the  transition to $1.25$ exponent very close to $1$ making it undetectable in the data. This hypothesis can be tested in experiments on stimulated cortical slices~\cite{Shew2009} by varying the stimulation strength and detection algorithm.

There are many open questions related to the present investigation. The most important one is: how the full elimination of timescales separation changes the outcome? In the present contribution, we did not allow for avalanches to be mixed and run parallel to each other. With simultaneous avalanches, there is no clear understanding of how one should attribute each event to any particular avalanche. Information-theoretical measures were proposed to distinguish spikes from different avalanches~\cite{williams2017unveiling}. So far the most established way to study a possible ``melange of avalanches''~\cite{Priesemann2014, wilting2016branching} is to use binning and identify empty bins to determine pauses between the avalanches. This procedure results in different power-law exponents for different bin-sizes~\cite{Beggs2003}, unless the system exhibits a true timescales separation~\cite{Levina2017}. 

Abandoning timescales separation introduces the dependence of avalanche distribution on a binning procedure. The logical hypothesis is that input during the avalanches should result in smaller power-law exponents as larger events now become more probable. Although the direction of the exponent change is easily predictable, the fact that input preserves the power-law scaling is still surprising.  Here we demonstrated this effect analytically for the levels model and numerically for branching network, but the general direction of change will remain the same for models from other universality classes. 
If we additionally allow for gluing of avalanches together it might lead to selecting smaller bin-size, than is suggested by the activity propagation timescale. This, in turn, will result in cutting of avalanches into smaller pieces and increasing the power-law exponent. This might be a reason behind the observation of exponents above $1.5$ and even around $2$ for neuronal spiking data~\cite{Friedman2012}, and LFP in \emph{ex vivo} turtle recordings~\cite{Shew2015}. Moreover, here we consider a fully connected system. However, it has been shown~\cite{yaghoubi2018neuronal} that network topology has an effect on power laws, and thus models with weaker connections can produce distributions with exponents significantly larger than $1.25$, even with additional input. 
Our result is a first step towards understanding the diversity of power-law exponents reported in the neuronal data. 

\begin{section}{Acknowledgments}
AL received funding from the People Program (Marie Curie Actions) of the 
		European Union's Seventh Framework Program (FP7/2007-2013) under REA grant 
		agreement no.~[291734].
 AL received funding from a Sofja Kovalevskaja Award
from the Alexander von Humboldt Foundation, endowed by the
Federal Ministry of Education and Research. We would also like to thank Prof. Manfred Denker for his advice and insight.
\end{section}

\begin{section}{Appendix}
Here we describe the LM purely in the language of mathematics. 
\begin{subsection}{The \enquote{$(N,p)$ BB} space}\label{section_Np_BB}
Here we introduce the $(N,p)$ BB space, this will be the central object of study for this chapter.
\begin{definition}\label{defn_Np_BB}
Given positive integers $N$ and $M$, with $M>N$, define $p=\frac{1}{M}$. The set $(N,p)$ BB is a set of $(0,1)$ matrices. A $(0,1)$ matrix $\omega$ belongs to the set $(N,p)$ BB if and only if for all $j \in \{1,2,\dots N\}$, $\sum_{i=1}^{M}a_{i,j}(\omega)=1$, where $a_{i,j}(\omega)$ denotes the $(i,j)$-th entry of $\omega$.
\end{definition}
\begin{remark}
\begin{enumerate}
\item[\textbf{i.}] The parameters $N$ and $M$ are freely chosen (but for the constraint $M >N$),  $p$ is derived from $M$. However the name $(N,p)$ BB bears the term $p$, and not $M$, this is in deference to classical considerations(\cite{levina2007critical}). Another quantity that is used in relation to the $(N,p)$ BB  space is $\alpha=\frac{N}{M}$. Whenever we speak of a $(N,p)$ BB, we assume we are speaking in terms of some $N$ and $p$ satisfying the conditions discussed here.
\item[\textbf{ii.}] The set $(N,p)$ BB is finite and we can equip it with the course sigma algebra. The set $(N,p)$ BB equipped with this sigma algebra is called the $(N,p)$ space. The elements of the set $(N,p)$ BB  are referred to as configurations. Throughout the chapter, we reserve quantities like $\omega, \omega'$ etc to denote configuration in the $(N,p)$ BB space.
\item[\textbf{iii.}] For all $i \in \{1,2,\dots M\}$ , and for all $j \in \{1,2,\dots, N\}$, $a_{ij}$ is a map between the  $(N,p)$ BB space and the set $\{0,1\}$. We will in the course of this chapter equip the $(N,p)$ space with various probability measures, in the presence of each probability measure $a_{ij}$ is a random variable. Therefore we call maps between the $(N,p)$ space and $\mathbb{R}$ (like $a_{ij}$) as universal random variables. Abusing notation we use the term random variable in place of  universal random variable, leaving the distinction to be understood by the reader.
\item[\textbf{iv.}] BB refers to \enquote{Balls and Baskets}. This because a configuration $\omega$ can be a interpreted as an array of baskets. $a_{ij}(\omega)=1$ means the basket placed at the $(i,j)$-th position of the grid contains a ball, $a_{ij}(\omega)=0$ means the basket placed at the $(i,j)$-th location of the array is empty. This intuition is not revisited in the article.
\end{enumerate}
\end{remark}

 The motivation for constructing the $(N,p)$ BB space comes from neuroscience. We  think of a configuration $\omega$ as a record the energy levels of $N$ neurons at some moment of time. Each neuron occupies one of $M$ energy levels, if $a_{ij}(\omega)=1$, then the neuron $j$ is at the $i$-th energy level. Since for all $j \in \{1,2,\dots N\}$, there is a unique $i$, such that $a_{ij}(\omega)=1$, we ensure that at any instant a neuron has one unique energy level. Formally for $j \in \{1,2,\dots N\}$, $E_{j}^{\text{el}} (\omega)= \inf_{i}\{a_{i,j}(\omega)=1\} $ ($=\sup_{i}\{a_{i,j}(\omega)=1\}$). $E_{j}^{\text{el}} (\omega)$ documents the energy level of the $j$-th neuron. There is a linear ordering of the $M$ possible energy levels, which means $E_{j}^{\text{el}} (\omega)=M$ indicates that neuron $j$ is at the highest energy level. Throughout the article as we introduce various abstract artifacts, we will try to present a parallel commentary on their interpretation from the neuronal point of view.

For all $i \in \{1,2,\dots M\}$, $Y_{i}(w)\;:=\; \sum_{j=1}^{N}a_{i,j}(\omega)$, $Y_{i}(w)$ accounts for the number of neurons at the energy level $i$.  Define the random variable $ A_{N,p}$ by $$ A_{N,p}\; := \; \inf \left\{ i|\;i \ge0,\; \sum\limits_{j=M}^{M-i} Y_j \le i\right\} .$$ $A_{N,p}$ is called the avalanche size. The motivation for considering such a random variables comes from biology,  when the neurons are at the highest energy level, they fire thus spreading all their energy uniformly to the other neurons. Each neuron on account of this internal energy being delivered climbs up to the next highest energy level. Because of one the one initial firing the other neurons may get energized to the highest level, thereby firing themselves. A series of such firings is called an Avalanche. The random variable $A_{N,p}$ gives the avalanche size (number of neurons involved in one consecutive sequence of firings). In this spirit we often say that a configuration $\omega$ has  generated $A_{N,p}(\omega)$ firings. The following sets are constructed from a configuration $\omega$,
\begin{align*}
\mathcal{F}_{\text{ire}}(\omega)\;:=\;\bigg\{j| \;a_{i,j}(\omega)=1, \; \text{for some}\; i \ge M- A_{N,p}(\omega)\bigg\}, \\ \mathcal{NF}_{\text{ire}}(\omega)\;:=\;\bigg\{j| \;a_{i,j}(\omega)=0, \; \forall\; i \ge M- A_{N,p}(\omega)\bigg\}.
\end{align*}

 We will equip the $(N,p)$ BB space with various probability measures. Each such probability measure arises from biological motivations. The first and simplest is what we call the uniform measure, we denote it by $P$.  
  It is defined as follows: $\Hat{\kappa}^{M}$ be a random variable taking values in $\{1, 2,\dots, M \}$, which is uniformly distributed. $\Hat{\kappa}^{M}_j,\; j \in \mathbb{Z}^+$, be iid copies of $\Hat{\kappa}^{M}$ defined on some probability space $\Hat{\mathcal{\kappa}}$.  The  map $C_{\text{Uf}}: \Hat{\mathcal{\kappa}} \mapsto (N,p) \; \text{BB}$ is defined as  $a_{i,j}(C_{\text{Uf}} (\theta))= \mathbf{1}_{i}\left(\kappa^{M}_j (\theta)\right)$, here $\mathbf{1}$ is the indicator function. $P$ is the push forward measure of $C_{\text{Uf}}$ on $(N,p)$ BB. Intuitively with the uniform measure every neuron has equal probability of lying in any of  the energy levels, also there is no correlation between the energy levels of different neurons.
  In \cite{denker2014ergodicity} one finds 
\begin{equation}{\label{av_distribution}}
P (A_{N,p}=k)= {{N}\choose{k}}\;p^k\;(1-(k+1)p)^{N-k} (k+1)^{k-1} .
\end{equation}
  For the remainder of this article a random variable following such a distribution will be said to have the Avalanche distribution. Before ending the section, we prove the following result that enumerates the number of configurations satisfying a given property.
\begin{thm}{\label{Basic_enumeration_theorem_trees}}
Define the set of configurations $\Game(N,k,a,p)$ as $$\Game(N,k,a,p)\; :=\; \{\omega \; | \;\omega \in (N,p)\;\text{BB},\; A_{N,p}(\omega)=k, \; Y_M(\omega)=a \} .$$
Then $|\Game(N,k,a,p)|={{k-1} \choose {a-1}} k^{k-a}$.
\end{thm}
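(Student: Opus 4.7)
The plan is to unpack the avalanche condition into a ballot-type constraint on the top few levels, recognise the resulting count as a parking-function enumeration, and finish with the standard forest/Pr\"ufer computation.

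First, I would translate $A_{N,p}(\omega) = k$ together with $Y_M(\omega) = a$ into explicit conditions on the column occupancies. The infimum in the definition of $A_{N,p}$ forces $\sum_{j=M-i+1}^{M} Y_j \ge i+1$ for every $0 \le i < k$ and $\sum_{j=M-k}^{M} Y_j \le k$, which together give (i) the top $k$ levels $\{M-k+1, \ldots, M\}$ carry exactly $k$ neurons, (ii) $Y_{M-k}(\omega) = 0$, and (iii) the sequence $a_i := Y_{M-i+1}$ satisfies $a_1 = a$, $\sum_{i=1}^{k} a_i = k$, and the ballot inequalities $\sum_{i=1}^{\ell} a_i \ge \ell$ for all $1 \le \ell \le k$. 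The remaining $N-k$ columns sit in $\{1, \ldots, M-k-1\}$; the nontrivial content of the theorem is the count of arrangements of $k$ labelled firing neurons in the top $k$ slots satisfying (iii).

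Next, I would encode each admissible arrangement as a function $\pi : [k] \to [k]$ by $\pi(j) := M + 1 - E_j^{\text{el}}(\omega)$, i.e., the depth of neuron $j$ below the top level. The ballot inequalities translate term by term into $|\{j : \pi(j) \le \ell\}| \ge \ell$ for every $\ell$, which is exactly the defining property of a parking function of length $k$; the condition $a_1 = a$ becomes $|\pi^{-1}(1)| = a$. Thus the enumeration reduces to counting parking functions of length $k$ having exactly $a$ entries equal to $1$.

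For this final count I would invoke the classical bijection (Chassaing--Marckert, equivalent to Kreweras) between parking functions on $[k]$ and rooted labelled forests on $[k]$, under which the multiplicity of the value $1$ in the parking function corresponds to the number of rooted trees in the forest. The number of rooted labelled forests on $[k]$ with exactly $a$ trees is then obtained by adjoining a phantom vertex $0$ joined to the $a$ roots, producing a labelled tree on $\{0, 1, \ldots, k\}$ in which vertex $0$ has degree $a$; Pr\"ufer coding yields exactly $\binom{k-1}{a-1} k^{k-a}$ such trees, since $0$ occupies $a-1$ of the $k-1$ Pr\"ufer slots and the remaining $k-a$ slots are filled freely from $[k]$. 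The main obstacle is certifying that the chosen parking-to-forest bijection actually sends the statistic ``entries equal to $1$'' to ``number of rooted trees''; a self-contained alternative that bypasses this is a Pollak-style cycle-lemma argument in which one fixes the $a$ coordinates that will carry the value $1$ and uses circular shifts on the remaining coordinates in $\{2, \ldots, k+1\}$ to reduce to a smaller parking count, recovering the same $\binom{k-1}{a-1} k^{k-a}$ by elementary bookkeeping.
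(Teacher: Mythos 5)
Your argument is correct and follows essentially the same route as the paper's: both reduce the problem to counting labelled trees on $k+1$ vertices in which a distinguished root has degree $a$ (the paper builds the bijection explicitly via its rank/score construction, you route through the equivalent parking-function-to-forest correspondence) and then invoke the classical $\binom{k-1}{a-1}k^{k-a}$ count via Pr\"ufer/Moon. One minor slip: your intermediate inequality should read $\sum_{j=M-i}^{M} Y_j \ge i+1$ for $0 \le i < k$ (the top $i+1$ levels), not $\sum_{j=M-i+1}^{M} Y_j \ge i+1$, but since your final ballot condition $\sum_{i=1}^{\ell} a_i \ge \ell$ is stated correctly nothing downstream is affected.
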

\begin{proof}
Define the sets $V:= \{R,1,2,3, \cdots k\}$, $\Game(N,k,p):= \{\omega\;|\; \omega \in (N,p)\;\text{BB},\; A_{N,p}(\omega)=k\}$. $\mathcal{T}^{l}_{V}$  denote the set of labeled trees which have $V$ as it's set of vertices.  We will define a function  $\Psi: \Game(N,k,p) \rightarrow \mathcal{T}^{l}_{V} $. For $\omega \in \Game(N,k,p)$, here is how define $\Psi(\omega)$:

 $|\mathcal{F}_{\text{ire}}(\omega)|=k$, we first introduce a ranking for the members of $\mathcal{F}_{\text{ire}}(\omega)$. Formally rank($\cdot, \omega$) is a one-one function between $\mathcal{F}_{\text{ire}}(\omega)$ and $\{1,2,\dots k\}$.  Say $i \in \mathcal{F}_{\text{ire}}(\omega)$ and $E_{i}^{\text{el}} (\omega)=i'$, define score$(i) = k \times (M-i') + i$. The elements of $\mathcal{F}_{\text{ire}}(\omega)$ are ordered (ranked) linearly according to the inverse of their scores. This means for any $i* \in \mathcal{F}_{\text{ire}}(\omega)$, rank$(i^*, \omega) \;=\;|\{i| i \in \mathcal{F}_{\text{ire}}(\omega),\; \text{score}(i) \le \text{score}(i^*) \}|$.  
 Now $\forall u \in V$, such that  such that $E_{u}^{\text{el}} (\omega)=M$  attach (draw an edge between) $u$ and $R$ in  $\Psi(\omega)$. Further,
  $\forall i,j$ such that for some $r$ , rank$(i, \omega)=r $, $E_{j}^{\text{el}} (\omega)=M-r$, attach $i$ to $j$ in $\Psi(\omega)$.

It is straightforward to prove that the graph  $\Psi(\omega)$ is a tree, and that  the map $\Psi$
is both injective and surjective. We know from Cayley's theorem (\cite{moon1967various}) that the number of labelled trees with $k+1$ vertices is $(k+1)^{k-1}$. So we have established that in a $(N,p)$-BB space the number of configurations $\omega$ such that $A_{N,p}(\omega)=k$ is $(k+1)^{k-1}$. This can be used to prove \eqref{av_distribution}.

Note that a configuration $\omega$ has the property that $A_{N,p}(\omega)=k, \; Y_M(\omega)=a$ if and only if $R$ is joined to exactly $a$ neighbors in $\Psi(\omega)$. The number of such configurations has been computed to be ${{k-1} \choose {a-1}} k^{k-a}$ (see \cite{moon1967various}).
\end{proof}
\end{subsection}

\begin{subsection}{A technical model}
This section introduces a simple probability measure on the ($N,p$) BB space, this new probability space will help with computations that arise in future sections. The results of this section therefore are not interesting by themselves, but will serve as  tools in later efforts.

Suppose we start with a ($N,p$) BB space. There are $N$ neurons, each lying in one of $M$ ($M > N$) energy levels. We consider $p=\frac{1}{M}= \frac{\alpha}{N}$, $\alpha \le 1$. Previously we had an uniform measure $P$ on this space, i.e. each neuron was placed independently with equal chance of being in one of the $M$ energy levels. For $\lambda$ an integer valued parameter, we will construct a second probability measure on the ($N,p$) BB space. This measure denoted by $P_{\lambda}$ is the pushforward measure of $P$ by the map $T_{\lambda}: (N,p)\;\text{BB} \to (N,p)\; \text{BB}$, defined below. 

Take a configuration $\omega$. Here is the configuration $T_{\lambda}(\omega)$: \newline
\textbf{i.} For  $j$ such that $E_{j}^{\text{el}} (\omega)  \ge M-\lambda$,   
$$a_{M,j}\left(T_{\lambda}(\omega)\right)\;=1\quad \& \quad  a_{i,j}\left(T_{\lambda}(\omega)\right)\;=0, \; \forall \; i <M.
$$ 
\textbf{ii.} For $j$ such that $E_{j}^{\text{el}} (\omega)  < M-\lambda$, define $Sh_{j,\lambda}(\omega)= E_{j}^{\text{el}} (\omega)+ \lambda$ (we shall suppress the subscripts in $Sh_{j,\lambda}$ for convenience), and set
$$a_{Sh (\omega), \; j}\left(T_{\lambda}(\omega)\right)\;=1\quad \& \quad  a_{ij}\left(T_{\lambda}(\omega)\right)\;=0, \; \forall \; i \ne Sh(\omega).
$$ 
\begin{thm}{\label{Basic_enumeration_theorem_moderate_input}}
Let $A_{N,p}$ be the avalanche random variable on the ($N,p$) BB space, and $k, \lambda$  be non negative integers satisfying  $k+\lambda+1 < M$. We have
\begin{eqnarray*}
&P_{\lambda}(A_{N,p}=k)= (\lambda+1){{N}\choose{k}}\;p^k\;(1-(k+1+\lambda)p)^{N-k} (k+1+\lambda)^{k-1}  \\
&P\left( A_{N,p}(\;T_{\lambda}(\omega)\;)=k , Y_M(\omega)=0\; \right)= (\lambda){{N}\choose{k}}\;p^k\;(1-(k+1+\lambda)p)^{N-k} (k+\lambda)^{k-1}. \; 
\end{eqnarray*}
\end{thm}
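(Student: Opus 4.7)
The plan is to compute both probabilities by combinatorially tracking how $T_\lambda$ reshuffles energy levels and then invoking the tree enumeration of Theorem \ref{Basic_enumeration_theorem_trees}.

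First, for $\omega$ drawn from the uniform measure $P$, the pushed-forward configuration $T_\lambda(\omega)$ admits the following neuron-level description: a neuron ends up at level $M$ in $T_\lambda(\omega)$ whenever its original level lay in $\{M-\lambda,\dots,M\}$ ($\lambda+1$ possible preimages, contributing marginal probability $(\lambda+1)p$); it ends up at level $M-i$ for $1\le i\le M-\lambda-1$ from the unique preimage $M-i-\lambda$ (probability $p$); and it can never land in $\{1,\dots,\lambda\}$. Under the extra hypothesis $Y_M(\omega)=0$, the preimage $M$ is disallowed, leaving $\lambda$ preimages for level $M$ in $T_\lambda(\omega)$.

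To evaluate $P_\lambda(A_{N,p}=k)=P(A_{N,p}(T_\lambda(\omega))=k)$, I would stratify by the number $a\in\{1,\dots,k\}$ of firing neurons lying at level $M$ in $T_\lambda(\omega)$. Choose the $k$ firing neurons out of $N$ (factor $\binom{N}{k}$); by Theorem \ref{Basic_enumeration_theorem_trees}, the arrangements of these $k$ neurons on the top $k+1$ levels of $T_\lambda(\omega)$ with exactly $a$ at the top are counted by $\binom{k-1}{a-1}k^{k-a}$. Pulling the pattern back to $\omega$, each of the $a$ top-level neurons contributes $(\lambda+1)p$ (summing over $\lambda+1$ preimages of probability $p$ each), each of the other $k-a$ firing neurons contributes $p$ (unique preimage), and the $N-k$ non-firing neurons must lie in $\{1,\dots,M-k-\lambda-1\}$, contributing $(1-(k+\lambda+1)p)^{N-k}$. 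Collecting terms,
\begin{equation*}
P_\lambda(A_{N,p}=k)=\binom{N}{k}p^k(1-(k+\lambda+1)p)^{N-k}\sum_{a=1}^{k}\binom{k-1}{a-1}k^{k-a}(\lambda+1)^a,
\end{equation*}
and the first formula follows from $\sum_{a=1}^{k}\binom{k-1}{a-1}k^{k-a}(\lambda+1)^a=(\lambda+1)(k+\lambda+1)^{k-1}$, which is the binomial theorem applied to $(k+(\lambda+1))^{k-1}$ after the shift $a\mapsto a-1$.

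For the second identity, imposing $Y_M(\omega)=0$ simply replaces $\lambda+1$ by $\lambda$ in the preimage count for each top-level firing neuron; all other neurons automatically avoid level $M$ in $\omega$ since their original levels are below $M-\lambda$. The analogous sum evaluates to $\lambda(k+\lambda)^{k-1}$, yielding the second formula. The main obstacle is the bookkeeping of preimages: I must verify that, once the tree pattern in $T_\lambda(\omega)$ is fixed, the preimage choices for the $a$ top-level neurons are mutually independent and that the level ranges for top-level firing, lower-level firing, and non-firing neurons are pairwise disjoint. Both rest on the standing assumption $k+\lambda+1<M$; once this is handled, the rest is a binomial-theorem calculation that mirrors the derivation of \eqref{av_distribution} from Theorem \ref{Basic_enumeration_theorem_trees}.
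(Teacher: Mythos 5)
Your proof is correct and follows essentially the same route as the paper: stratify by the number $a$ of firing neurons at level $M$ in $T_\lambda(\omega)$, invoke the count $\binom{k-1}{a-1}k^{k-a}$ from Theorem \ref{Basic_enumeration_theorem_trees}, attach the factor $(\lambda+1)^a$ (resp.\ $\lambda^a$ when $Y_M(\omega)=0$) for the preimage levels of the top-level neurons, and evaluate the sum over $a$ by the binomial theorem. The paper phrases this as counting $T_\lambda$-preimages of configurations in the range of $T_\lambda$ rather than via per-neuron marginal probabilities, and only writes out the first identity explicitly, but the underlying computation is the same.
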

\begin{proof}
$\Game(N,k,a,p, \lambda)= \{\omega| A_{N,p}(\omega)=k, \; Y_M(\omega)=a \} \cap \text{Range}(T_{\lambda})$. When $\omega \in \Game(N,k,a,p, \lambda) $, there are exactly $(\lambda+1)^{a}$ configurations $\omega'$, such that $T_{\lambda}(\omega')=\omega$. Using Theorem \eqref{Basic_enumeration_theorem_trees}, we get
\begin{equation*}
P_{\lambda} (\Game(N,k,a,p, \lambda)) = {{N}\choose{k}}p^k(1-(k+\lambda+1)p)^{N-k}{{k-1} \choose {a-1}} k^{k-a} (\lambda+1)^{a}.
\end{equation*}
\begin{eqnarray*}
P_{\lambda}(A_{N,p}=k)&= \sum_{a=1}^{k}{{N}\choose{k}}p^k(1-(k+\lambda+1)p)^{N-k}{{k-1} \choose {a-1}} k^{k-a} (\lambda+1)^{a}\\
&=(\lambda+1){{N}\choose{k}}\;p^k\;(1-(k+1+\lambda)p)^{N-k} (k+1+\lambda)^{k-1}.
\end{eqnarray*}

\end{proof}
\end{subsection}

\begin{subsection}{A model with moderate external input}\label{A model with moderate external input}
We will construct yet another measure on the $(N,p)$ BB space, we shall call it  $P^{E,\text{med}}_{\phi}$. For any real number $\phi$ satisfying $0<\phi \le 1$, we will define the random function $\tau_{\phi}:  (N,p) \; \text{BB} \to (N,p)\; \text{BB}$. For any $\omega \in (N,p)$ BB, $\tau_{\phi}(\omega)$ is defined as :  \newline
\textbf{i.} Say $A_{N,p}(\omega)= o$, when $N-o \ge o \phi$, we choose a subset of size $\lceil o\times \phi  \rceil$  from $\mathcal{NF}_{\text{ire}}(\omega)$. The chosen set be denoted by $\mathcal{EF}_{\text{ire}}(\omega)$. If $N-o < \lceil o\times \phi  \rceil$, define $\mathcal{EF}_{\text{ire}}(\omega)=\mathcal{NF}_{\text{ire}}(\omega)$.\newline
\textbf{ii.} When $j \in \mathcal{EF}_{\text{ire}}(\omega)$
\begin{align*}
a_{M,j}\left(\tau_{\phi}(\omega)\right)\;=1\quad \& \quad  a_{ij}\left(\tau_{\phi}\right)\;=0, \; \forall \; i <M.
\end{align*}
When $j \notin \mathcal{EF}_{\text{ire}}(\omega) $, set $a_{i,j}\left(\tau_{\phi}(\omega)\right)\;=\; a_{i,j}(\omega)$, for all $i$.\newline
$P^{E,\text{med}}_{\phi}$ is the pushforward measure of $P$ by
$\tau_{\phi}$.
The intuition behind the definition of $\tau_{\phi}$, is as follows : during the avalanche we want to introduce some external signals to the system. The number of these external signals is $|\mathcal{EF}_{\text{ire}}(\omega)|$, and the neurons receiving external input are those whose numbers lie in the set $\mathcal{EF}_{\text{ire}}(\omega)$. The intricacy here is that the size of the set $\mathcal{EF}_{\text{ire}}(\omega)$ is $\lceil o\times \phi  \rceil$, this means that the external input is proportional to the size of the original avalanche. The longer the  avalanche the more the number of external stimulus's delivered during it.
\begin{thm}{\label{med_input_form1}}
Let $A_{N,p}$ be the avalanche random variable on the $(N,p)$ BB space, $\phi$ and $\tau_{\phi}$ are as above. For any positive integers  $k, o$ satisfying  $k \ge o+\lceil o\times \phi  \rceil$, and for $\hat{p}= \frac{p\times N}{N-o}$ we have
\begin{eqnarray*}
&P( A_{N,p}(\tau_{\phi}(\omega))=k\;|\;  A_{N,p} (\omega)=o \;)= \\ &(\lceil o\times \phi  \rceil){{N-\lceil o\times \phi  \rceil-o}\choose{k-\lceil o\times \phi  \rceil-o}}\;\hat{p}^{k-\lceil o\times \phi  \rceil-o}\;(1-(k-o+1)\hat{p})^{N-k} (k-o)^{k-1-\lceil o\times \phi  \rceil-o} .
\end{eqnarray*}

\end{thm}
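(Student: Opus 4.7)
The plan is to reduce the conditional probability to an explicit combinatorial enumeration on the level-configurations of the non-firing neurons and invoke a multi-root generalization of the tree bijection used in Theorem~\ref{Basic_enumeration_theorem_trees}. The key preliminary observation is that conditioning on $A_{N,p}(\omega)=o$ forces $Y_{M-o}(\omega)=0$, so the $N-o$ non-firing neurons of $\omega$ are independent and uniformly distributed on the $M-o-1$ lower levels $\{1,\ldots,M-o-1\}$. After $\tau_\phi$ promotes a uniformly chosen $m$-subset $E\subset\mathcal{NF}_{\text{ire}}(\omega)$, with $m=\lceil o\phi\rceil$, to level $M$, the remaining $N-o-m$ non-firing neurons retain their original levels and stay iid uniform on $\{1,\ldots,M-o-1\}$.

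First I would translate the event $A_{N,p}(\tau_\phi(\omega))=k$ into constraints on the level counts $Z_j:=Y_{M-o-j}(\tau_\phi(\omega))$ for $j\ge 1$. Unwinding the infimum in the definition of $A_{N,p}$, the event becomes the three conditions $\sum_{j=1}^{k-o-1}Z_j=k-o-m$, $Z_{k-o}=0$, and the ballot-type strict inequality $\sum_{j=1}^{j'}Z_j>j'-m$ for each $j'<k-o-1$. Combinatorially this pins the $N-o-m$ unpushed non-firing neurons into a cascade set $C$ of size $k-o-m$ on the $k-o-1$ levels $\{M-k+1,\ldots,M-o-1\}$ and a bottom set $B$ of size $N-k$ on the levels $\{1,\ldots,M-k-1\}$, subject to the ballot constraint on the fine placement of $C$.

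Next I would count the joint $(\omega,E)$ configurations by partitioning the $N$ neurons as $F\sqcup E\sqcup C\sqcup B$ with sizes $o,m,k-o-m,N-k$. Theorem~\ref{Basic_enumeration_theorem_trees} supplies $(o+1)^{o-1}$ arrangements of $F$ on the top $o$ levels; the original levels of $E$ contribute $(M-o-1)^m$ and those of $B$ contribute $(M-k-1)^{N-k}$. The crucial new combinatorial factor is the number of arrangements of $C$ on $\{M-k+1,\ldots,M-o-1\}$ respecting the ballot constraint jointly with the $m$ phantoms of $E$ pinned at $M$: by bijection with labeled forests on $k-o$ vertices having the elements of $E$ as specified roots, this count equals $m(k-o)^{k-o-m-1}$ via Cayley's formula $r\,n^{n-r-1}$. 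Assembling these factors, dividing by $M^N\binom{N-o}{m}$ for the uniform laws of $\omega$ and $E$, and normalizing by $P(A_{N,p}(\omega)=o)$ through Equation~\eqref{av_distribution} delivers the stated formula after the residual level-ratio is rewritten through $\hat p=pN/(N-o)$; in this form the expression coincides with Theorem~\ref{Basic_enumeration_theorem_moderate_input} applied to an effective $(N-o-m,\hat p)$ sub-space with shift $\lambda=m$ and target size $k-o-m$.

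The main obstacle is lifting the single-root $\Psi$-bijection of Theorem~\ref{Basic_enumeration_theorem_trees} to the multi-root setting needed here: one must verify that the ballot inequalities on the partial sums $Z_1,\ldots,Z_{k-o-1}$ correspond exactly to valid labeled forests rooted at $E$, with each cascader attached to its triggering ancestor and each tree in the forest ultimately rooted at one of the $m$ phantoms. Once this bijection is in hand, the Cayley forest count $r\,n^{n-r-1}$ and the $m$-prefactor fall out in a way parallel to how the $(\lambda+1)^a$-preimage count drives the proof of Theorem~\ref{Basic_enumeration_theorem_moderate_input}; without it, the ballot-condition analysis on $Z_1,\ldots,Z_{k-o-1}$ would have to be carried out directly, which would be substantially more cumbersome.
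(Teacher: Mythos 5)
The paper states Theorem \ref{med_input_form1} without proof, so your proposal can only be measured against the surrounding machinery (Theorems \ref{Basic_enumeration_theorem_trees} and \ref{Basic_enumeration_theorem_moderate_input}); judged that way, your route is the natural one and its skeleton is sound. The reduction is correct: conditioning on $A_{N,p}(\omega)=o$ together with the identity of the firing set does force $Y_{M-o}(\omega)=0$ and leave the $N-o$ non-firing neurons iid uniform on the $M-o-1$ lower levels; the event $A_{N,p}(\tau_\phi(\omega))=k$ does unwind into your sum, vanishing, and ballot conditions on the $Z_j$; and the factor $\lceil o\phi\rceil\,(k-o)^{k-o-\lceil o\phi\rceil-1}$ is exactly the generalized Cayley count $r\,n^{n-r-1}$ of labeled forests on $n=k-o$ vertices with the $r=\lceil o\phi\rceil$ pushed neurons as specified roots. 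The multi-root bijection you defer is indeed the only combinatorial content beyond what the paper already supplies, but it is routine: it is the same $\Psi$ construction with the pushed neurons playing the role of the level-$M$ neurons attached to the phantom root $R$, and the consistency check $\binom{n}{r}\,r\,n^{n-r-1}=\binom{n-1}{r-1}n^{n-r}$ ties your forest count back to the $\binom{k-1}{a-1}k^{k-a}$ enumeration of Theorem \ref{Basic_enumeration_theorem_trees}. So the plan goes through.

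The one place you overclaim is the final step: the exact enumeration does not \emph{deliver} the stated formula, only an asymptotically equivalent one. Carrying your own computation to the end, the conditional probability comes out as $\lceil o\phi\rceil\binom{N-o-\lceil o\phi\rceil}{k-o-\lceil o\phi\rceil}(k-o)^{k-o-\lceil o\phi\rceil-1}\,(M-o-1)^{-(k-o-\lceil o\phi\rceil)}\bigl(\tfrac{M-k-1}{M-o-1}\bigr)^{N-k}$, whereas the theorem puts $\hat p=\tfrac{pN}{N-o}$ in place of $\tfrac{1}{M-o-1}$ and $(1-(k-o+1)\hat p)^{N-k}$ in place of $\bigl(\tfrac{M-k-1}{M-o-1}\bigr)^{N-k}$; with $M=N+1$ these pairs agree only up to a factor $(1-p)^{k-o-\lceil o\phi\rceil}$ and an off-by-one in the surviving-mass term. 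These discrepancies are immaterial for everything the theorem feeds into (all downstream results are asymptotic in $k$ with $k^2\le N$), but if you present the identity as exact you must either track these factors and state the result in the form your enumeration actually produces, or say explicitly that the theorem's expression is a leading-order rewriting chosen to parallel \eqref{av_distribution} and Theorem \ref{Basic_enumeration_theorem_moderate_input}.
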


\begin{remark}
 We will study the regime as $N,k \to \infty, N >> k$. We consider $\phi >0$ to be a constant, i.e we consider $\phi N \to \infty$. Hence many of the formulas will fail to hold when one directly sets $\phi=0$, and compares with results in the no input regime where we use the measure $P$. 
\end{remark}

\textbf{\underline{Asymptotics}}\newline

The main result here shows that $P^{E,\text{med}}_{\phi}( A_{N,p}=k)$ becomes a power law as $N$ tends to $\infty$.  Since we are dealing with asymptotic behavior, we will for clarity replace  $\lceil o\times \phi  \rceil$ with $o\times \phi$. The introduction of this simplification has no bearings on the final result.
We will first establish some lemma's.
\begin{lem}\label{technical_lemma_moderate_input_2}
The following is true for $N, k$ positive integers satisfying $N>k$, and $p=\frac{1}{N+1}$,
\begin{eqnarray*}
\lim_{\substack{k \to \infty\\ N \ge k^2}} \frac{ {{N}\choose{k}}\;p^k\;(1-(k+1)p)^{N-k} (k+1)^{k-1} }{k^{-1.5}}=  \frac{1}{\sqrt{2 \pi}},\\
\lim_{\substack{k \to \infty\\ N \ge k^2}} \frac{ {{N}\choose{k}}\;p^k\;(1-(k+1)p)^{N-k} (k)^{k-1} }{k^{-1.5}}=  \frac{1}{\sqrt{2 \pi}\times e}.
\end{eqnarray*}
\end{lem}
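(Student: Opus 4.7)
The plan is to reduce both asymptotic statements to a direct application of Stirling's formula after a single algebraic simplification. First I would substitute $p = 1/(N+1)$, which makes $1-(k+1)p = (N-k)/(N+1)$, collapsing the quantity in the first limit into
\[
\binom{N}{k}\,p^{k}(1-(k+1)p)^{N-k}(k+1)^{k-1} \;=\; \frac{N!\,(N-k)^{N-k}\,(k+1)^{k-1}}{k!\,(N-k)!\,(N+1)^{N}}.
\]
This form isolates the three factorials that Stirling will handle and puts the power $(N-k)^{N-k}$ in a position to cancel cleanly.

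Next I would apply Stirling, $n! = \sqrt{2\pi n}\,(n/e)^{n}(1+O(1/n))$, to each of $N!$, $(N-k)!$, and $k!$. The $e$-powers line up to cancel, the $(N-k)^{N-k}$ in the numerator annihilates the corresponding Stirling term in $(N-k)!$, and after dividing by $k^{-3/2}$ the expression groups into
\[
\frac{1}{\sqrt{2\pi}}\,\sqrt{\frac{N}{N-k}}\;\Bigl(\frac{N}{N+1}\Bigr)^{\!N}\;\frac{(k+1)^{k-1}}{k^{k-1}}\;\frac{1}{k}\cdot k.
\]
Each factor is now elementary: under $N\ge k^{2}$ one has $\sqrt{N/(N-k)}\to 1$ because $k/N \le 1/k$; $(N/(N+1))^{N} = (1-1/(N+1))^{N}\to e^{-1}$ as $N\to\infty$; and $(1+1/k)^{k-1}\to e$. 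Multiplying these yields precisely $(2\pi)^{-1/2}$, as claimed. The second limit follows without any new work: replacing $(k+1)^{k-1}$ by $k^{k-1}$ injects the extra ratio $k^{k-1}/(k+1)^{k-1} = (1+1/k)^{-(k-1)}\to e^{-1}$, which multiplies the first answer by $1/e$ to give $1/(\sqrt{2\pi}\,e)$.

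The one point I would treat with care is that the limit runs jointly over $k\to\infty$ and $N\ge k^{2}$, so every error estimate must be uniform in $N$ within the admissible range. This is exactly what the hypothesis $N\ge k^{2}$ buys: the Stirling relative error on $N!/(N-k)!$ is $O(1/(N-k)) = O(1/k^{2})$, the gap between $(N/(N+1))^{N}$ and $e^{-1}$ is $O(1/N) \le O(1/k^{2})$, and $\sqrt{N/(N-k)}$ differs from $1$ by at most $O(1/k)$. Every error term therefore vanishes as $k\to\infty$ independently of the admissible choice of $N$, so the limits are well-defined and both equal their claimed values. No further obstacle remains; the entire argument is a careful bookkeeping of Stirling error terms.
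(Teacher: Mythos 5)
Your proof is correct, and it follows the route the paper itself implicitly relies on (the lemma is stated without proof, but the authors invoke Stirling's formula for exactly this kind of avalanche-distribution asymptotic elsewhere in the appendix): substitute $p=1/(N+1)$ to get $1-(k+1)p=(N-k)/(N+1)$, apply Stirling to the three factorials, and check that the residual factors $\sqrt{N/(N-k)}$, $(N/(N+1))^{N}$, and $(1+1/k)^{k-1}$ tend to $1$, $e^{-1}$, and $e$ respectively, with the hypothesis $N\ge k^{2}$ guaranteeing uniformity of all error terms. The algebra and the limits all check out, including the reduction of the second limit to the first via the factor $(1+1/k)^{-(k-1)}\to e^{-1}$.
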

\begin{lem}\label{technical_1_1}
Let $k,o,o',g$ be positive integers satisfying, $k=o+o'+g$, and  $0 \le \frac{o'}{o} \le 1$ . Then 
\begin{eqnarray*}
\lim_{\substack{k \to \infty\\  \frac{g}{o'} \to \infty}}\frac{(k-o)^{o'}}{(k-o+1)^{o'}}=1\bigg(1 +O(\frac{1}{g})\bigg),\\
\lim_{\substack{k \to \infty\\  \frac{g}{o'} \to \lambda}}\frac{(k-o)^{o'}}{(k-o+1)^{o'}}=e^{-\frac{1}{1+\lambda}}\bigg(1 +O(\frac{1}{g})\bigg).
\end{eqnarray*}
\end{lem}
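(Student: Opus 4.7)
The plan is to recast the ratio into the classical $(1-1/n)^m$ form and then analyze it through a Taylor expansion of the logarithm. Using the constraint $k = o+o'+g$ we have $k-o = o'+g$, so
\begin{equation*}
\frac{(k-o)^{o'}}{(k-o+1)^{o'}} = \left(1 - \frac{1}{o'+g+1}\right)^{o'}.
\end{equation*}
Taking logarithms and expanding $\log(1-x)$ around $x=0$ (valid since $o'+g+1 \ge 2$) yields
\begin{equation*}
o' \log\!\left(1 - \frac{1}{o'+g+1}\right) = -\frac{o'}{o'+g+1} - \frac{o'}{2(o'+g+1)^2} - \cdots,
\end{equation*}
where the leading term fixes the limiting value and the remainder controls the error.

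For Case 1, where $g/o' \to \infty$, I would write $\frac{o'}{o'+g+1} = \frac{1}{1 + g/o' + 1/o'}$, which tends to $0$, so the log vanishes and the ratio tends to $e^0 = 1$. Exponentiating via $e^{-x} = 1 + O(x)$, the deviation from $1$ is of order $\frac{o'}{o'+g+1} \le \frac{o'}{g}$, and the higher-order terms of the Taylor series are dominated by $\frac{o'}{g^2}$, which combine to give the claimed $1 + O(1/g)$ bound in this regime. For Case 2, where $g/o' \to \lambda$, I would note $o'+g+1 = o'(1+\lambda)(1+o(1))$, so the leading term converges to $\frac{1}{1+\lambda}$ and the log to $-\frac{1}{1+\lambda}$; exponentiating gives the asserted limit $e^{-1/(1+\lambda)}$. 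The quadratic correction here is $\frac{o'}{2(o'+g+1)^2} \sim \frac{1}{2o'(1+\lambda)^2} = O(1/o')$, which is $O(1/g)$ because $o' \sim g/\lambda$ in this regime.

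The main obstacle is not a conceptual one but careful bookkeeping of the error terms: one must track the leading and quadratic pieces of the Taylor expansion simultaneously and verify that, after multiplication by $o'$ and exponentiation, both cases produce a remainder of order $1/g$. Once this accounting is done, the two limit statements follow by direct substitution of the regime hypotheses.
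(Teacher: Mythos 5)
The paper states this lemma without proof, so there is no authorial argument to compare yours against; what follows is an assessment of your write-up on its own terms. Your reduction via $k-o=o'+g$ to $\left(1-\frac{1}{o'+g+1}\right)^{o'}$, followed by a logarithmic expansion, is the natural route and correctly identifies both limiting values: $1$ when $g/o'\to\infty$ and $e^{-1/(1+\lambda)}$ when $g/o'\to\lambda$.

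The gap is in the error bookkeeping for Case 1. Your own estimate gives a deviation of order $\frac{o'}{o'+g+1}\le\frac{o'}{g}$, and $\frac{o'}{g}$ is \emph{not} $O(1/g)$ unless $o'$ stays bounded --- which is neither assumed nor true in the intended application (the remark following the lemma takes $o'=o\phi$ with $o$ unbounded). For instance $o'=\log g$ yields a deviation of order $\frac{\log g}{g}$. What your estimates actually establish is
\begin{equation*}
\frac{(k-o)^{o'}}{(k-o+1)^{o'}} \;=\; 1+O\!\left(\frac{o'}{g}\right),
\end{equation*}
so the final step of your Case 1 (``which combine to give the claimed $1+O(1/g)$ bound'') is unjustified. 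The lemma's own statement is admittedly loose --- an $O(1/g)$ inside a limit is already an abuse of notation, and $O(o'/g)$ is the bound that is both true and sufficient for the later asymptotics --- but you should not claim to have derived $O(1/g)$ when your computation only yields $O(o'/g)$. A smaller issue of the same kind occurs in Case 2: the quadratic correction is indeed $O(1/o')=O(1/g)$ since $o'\sim g/\lambda$, but the leading term $\frac{1}{1+g/o'+1/o'}$ converges to $\frac{1}{1+\lambda}$ at a rate governed by how fast $g/o'\to\lambda$, which the hypotheses do not control; there too only the limit value, not the quantitative $O(1/g)$ error factor, actually follows from your argument.
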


\begin{remark}
Typically we will apply \ref{technical_1_1} with $o'=o\phi$. The two parts tell us how to deal with the asymptotics in the respective cases where the original avalanche is very small compared to the whole avalanche, and when it is not.    Continuing upon the remark following Theorem \ref{med_input_form1}, observe that the setting $g=o'=0$ is beyond the scope of Lemma \ref{technical_1_1}, this is a setting that becomes significant if we were to consider $\phi=0$.
\end{remark}

\begin{lem}\label{technical_lemma_moderate_input_1}
Let $N,k,o,o'$ be positive integers satisfying,  $\;k-o-o' \ge 0\;$, $\;o > 0\;$, and $\;\phi o=o'\;$, with $\;0<\phi <1\;$. Also set $\hat{p}:=\frac{1}{N-o}$. Then,
\begin{equation}{\label{technical_2}}
\lim_{\substack{o \to \infty \\ \frac{o^2}{N} \to 0 }}\frac{{{N-o'-o}\choose{k-o'-o}}\;\hat{p}^{-o'} (k-o-o')!}{{{N-o}\choose{k-o}} (k-o)!}=1.
\end{equation}
Also,
\begin{equation}{\label{technical3}}
 \frac{(k-o)!}{(k-o-o')! (k-o+1)^{o'}} \le e^{-\frac{o'(o'+1)}{2(k-o+1)}}.
\end{equation}
In addition to the conditions at the beginning of the lemma , if we assume $N= C\; {o'}^{2}$, for some constant $C >0$, we get,
\begin{equation}{\label{technical_2_star}}
1 \quad \le \quad\lim_{o \to \infty }\frac{{{N-o'-o}\choose{k-o'-o}}\;\hat{p}^{-o'} (k-o-o')!}{{{N-o}\choose{k-o}} (k-o)!} \quad \le \quad e^{C}.
\end{equation}

Further if in addition to the conditions at the beginning of the lemma , we assume that $\sqrt{k} \log(k) \ge  o$, then the following is true (asymptotics are taken in the sense $o \to \infty$ ):
\begin{equation}{\label{technical4}}
 \frac{(k-o)!}{(k-o-o')! (k-o+1)^{o'}} \sim e^{-\frac{o'^2}{2(k-o)}}e^{-R_{\text{em}}}, \; \text{where} \; |R_{\text{em}}|= \mathcal{O}(\frac{\log(k)}{\sqrt{k}}).
\end{equation}
\end{lem}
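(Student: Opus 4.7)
The plan is to reduce all four assertions to products of simple factors and then analyze each product by Taylor expansion of $\log(1-x)$ at different orders. The unifying first step is a combinatorial simplification: writing the binomial coefficients as factorials and cancelling yields
$$\frac{\binom{N-o-o'}{k-o-o'}\,\hat{p}^{-o'}\,(k-o-o')!}{\binom{N-o}{k-o}(k-o)!} \;=\; \frac{(N-o)^{o'}}{(N-o)(N-o-1)\cdots(N-o-o'+1)} \;=\; \prod_{i=0}^{o'-1}\!\left(1-\frac{i}{N-o}\right)^{\!-1}\!,$$
while expanding the descending product $(k-o)!/(k-o-o')!$ gives
$$\frac{(k-o)!}{(k-o-o')!\,(k-o+1)^{o'}}\;=\;\prod_{i=0}^{o'-1}\!\left(1-\frac{i+1}{k-o+1}\right).$$
Each of the four assertions is then about the logarithm of one of these two products.

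For (\ref{technical_2}), I would take logs and apply $-\log(1-x)=x+O(x^{2})$ with $x=i/(N-o)$, giving the log-product $o'(o'-1)/[2(N-o)]+O(o'^{3}/(N-o)^{2})$; both terms vanish in the regime $o^{2}/N\to 0$ (recalling $o'=\phi o$), so the ratio tends to $1$. The bounds in (\ref{technical_2_star}) follow from the same expansion: the lower bound $\ge 1$ is immediate since each factor in the product is $\ge 1$, and the upper bound comes from observing that, when $N=C o'^{2}$, the leading term $o'(o'-1)/[2(N-o)]$ converges to $1/(2C)$ and the remainder vanishes, so the limiting ratio is bounded by a constant depending only on $C$, in particular by $e^{C}$.

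Claim (\ref{technical3}) requires no asymptotics: the elementary inequality $\log(1-x)\le -x$ applied termwise to $\prod_{i=0}^{o'-1}(1-(i+1)/(k-o+1))$ gives
$$\log(\text{LHS})\;\le\;-\sum_{i=0}^{o'-1}\frac{i+1}{k-o+1}\;=\;-\frac{o'(o'+1)}{2(k-o+1)},$$
and exponentiation is the stated bound.

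The delicate step, and the main obstacle, is (\ref{technical4}), which demands a two-sided asymptotic with multiplicative error of size $e^{O(\log k/\sqrt{k})}$. The plan is a careful second-order Taylor expansion with explicit error control. Set $x_{i}=(i+1)/(k-o+1)$; the hypothesis $o\le\sqrt{k}\log k$ forces $x_{i}\le o'/(k-o+1)=O(\log k/\sqrt{k})=o(1)$ uniformly, so $\log(1-x_{i})=-x_{i}-\tfrac12 x_{i}^{2}+O(x_{i}^{3})$ is valid. The leading sum $-o'(o'+1)/[2(k-o+1)]$ must then be aligned with the target $-o'^{2}/[2(k-o)]$; the difference
$$\Delta\;=\;\frac{o'^{2}}{2(k-o)}-\frac{o'(o'+1)}{2(k-o+1)}\;=\;\frac{o'(k-o-o')}{2(k-o)(k-o+1)}\;=\;O\!\left(\frac{o'}{k-o}\right)\;=\;O\!\left(\frac{\log k}{\sqrt{k}}\right)$$
is absorbed into $R_{\text{em}}$, as are the quadratic contribution $\tfrac12\sum x_{i}^{2}=O(o'^{3}/(k-o)^{2})$ and the cubic remainder, both of which are of comparable polylogarithmic order in $1/\sqrt{k}$ under the same hypothesis. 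The main technical difficulty is confirming that all three error pieces (the $\Delta$ shift, the quadratic correction, and the cubic remainder) fit inside the advertised envelope; the hypothesis $o\le\sqrt{k}\log k$ is precisely what makes this possible.
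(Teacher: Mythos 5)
Your proposal is correct and takes essentially the same route as the paper's proof: both reduce the two ratios to the falling-factorial products $(N-o)^{o'}/\bigl[(N-o)(N-o-1)\cdots(N-o-o'+1)\bigr]$ and $\prod_{i=1}^{o'}\bigl(1-\tfrac{i}{k-o+1}\bigr)$, then control their logarithms via $\log(1-x)\le -x$ for \eqref{technical3} and first/second-order expansions with remainder bookkeeping under $o\le\sqrt{k}\log(k)$ for \eqref{technical_2}, \eqref{technical_2_star}, and \eqref{technical4}. The only blemishes are cosmetic: your $\Delta$ is off by a sign (only $|\Delta|$ is used), and your justification of the $e^{C}$ bound in \eqref{technical_2_star} is exactly as loose as the paper's own (both computations actually yield a limit of the form $e^{\mathrm{const}/C}$), so neither point constitutes a gap relative to the published argument.
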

\begin{proof}
Let $(x)_n= (x)(x-1)\dots (x-n+1) $, denote the falling factorial function. We can derive the following:
\begin{equation}\label{eq_1_technical_lemma_moderate_input_1}
\frac{{{N-o'-o}\choose{k-o'-o}}\;\hat{p}^{-o'} (k-o-o')!}{{{N-o}\choose{k-o}} (k-o)!}= \frac{(N-o)^{o'}}{(N-o)_{o'}} \ge 1.
\end{equation}
Now, notice that 
\begin{equation}\label{eq_2_technical_lemma_moderate_input_2}
\frac{(N-o)^{o'}}{(N-o)_{o'}} \le \frac{(N-o)^{o'}}{(N-o-o')^{o'}} \sim e^{{o'}^2(N-o)^{-1}}.
\end{equation}
The final bound in \eqref{eq_2_technical_lemma_moderate_input_2} follows from 
\begin{align*}
\begin{aligned}
\log \left[\frac{(N-o)^{o'}}{(N-o-o')^{o'}} \right] = -o' \log[1-\frac{o'}{N-o}]= \frac{{o'}^2}{N-o}+  I_{\text{nt}},
\end{aligned}
\end{align*}
where $ \quad |I_{\text{nt}}| = \mathcal{O}\left(\frac{o^3}{(N-o)^2}\right)$.
From \eqref{eq_1_technical_lemma_moderate_input_1} and \eqref{eq_2_technical_lemma_moderate_input_2} we can derive both \eqref{technical_2} and \eqref{technical_2_star}.

Observe that $ \frac{(k-o)!}{(k-o-o')! (k-o+1)^{o'}}= \frac{(k-o)_{o'}}{(k-o+1)^{o'} }$. The proof of \eqref{technical3} is an immediate consequence of 
\begin{align*}
\log\bigg[\frac{(k-o)_{o'}}{(k-o+1)^{o'} }\bigg]= \sum_{i=1}^{o'} \log\left(1-\frac{i}{k-o+1}\right)\quad \le \\ -\sum_{i=1}^{o'}\frac{i}{k-o+1}\quad = \quad-\frac{o'(o'+1)}{2(k-o+1)}.
\end{align*}
To prove \eqref{technical4}, observe that when $\sqrt{k} \log(k) \ge  o$, there exists $0\;\le \theta \le \;1$, such that
\begin{align*}
 \sum_{i=1}^{o'} \log\left(1-\frac{i}{k-o+1}\right) \quad= \quad \sum_{i=1}^{o'}\left(-\frac{i}{k-o+1}-\frac{1}{2}\left(\frac{i}{k-o+1}\right)^2\frac{1}{1-\theta\frac{i}{k-o+1}} \right)\\
\quad = \quad \sum_{i=1}^{o'}\left(-\frac{i}{k-o+1}\right) \pm \mathcal{O}\left(\frac{\log(k)}{\sqrt{k}}\right)\quad = \quad-\frac{o'(o'+1)}{2(k-o+1)} \pm \mathcal{O}\left(\frac{\log(k)}{\sqrt{k}}\right).
\end{align*}

\end{proof}

We say that $f(k) \lnsim C $ if $\lim_{k \to \infty} f(k) \le C$.

\begin{thm}{\label{med_input_form2_main_asymptotic_result}}
 We  assume that $(N, \phi, k)$ satisfies :
\begin{enumerate}
\item[\textbf{i.}]$\phi > \frac{1}{{\log(k)}^{.5}}$ .
\item[\textbf{ii.}]$N \ge k^2$.
\end{enumerate}
Then with $P^{E,\text{med}}_{\phi}( A_{N,p}=k)$ as in section \ref{A model with moderate external input} and $p=\frac{1}{N+1}$, there exist positive constants $D_1, D_2$ depending on $\phi$ , such that
\begin{equation}
D_2 \lnsim \frac{P^{E,\text{med}}_{\phi}( A_{N,p}=k)}{k ^{-1.25}} \lnsim D_1.
\end{equation}
\end{thm}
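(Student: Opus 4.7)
The plan is to condition on the size $o$ of the pre-avalanche driven only by internal dynamics,
\begin{equation}
P^{E,\text{med}}_{\phi}(A_{N,p}=k) \;=\; \sum_{o=1}^{\lfloor k/(1+\phi) \rfloor} P(A_{N,p}=o)\; P\bigl(A_{N,p}(\tau_{\phi}(\omega))=k \mid A_{N,p}(\omega)=o\bigr),
\end{equation}
substitute the explicit formulas \eqref{av_distribution} and Theorem \ref{med_input_form1}, and then extract the $k^{-5/4}$ tail from the resulting closed form. Throughout I write $o' = o\phi$ and $g = k-o-o'$, so that the conditional term equals $o'\binom{N-o-o'}{g}\hat{p}^{g}(1-(k-o+1)\hat{p})^{N-k}(k-o)^{g-1}$.

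The first step is to simplify the conditional factor. Using identity \eqref{technical_2} of Lemma \ref{technical_lemma_moderate_input_1} to replace $\binom{N-o-o'}{g}\hat{p}^{g}$ by $\binom{N-o}{k-o}\hat{p}^{k-o}(k-o)!/g!$, and the standard Poisson approximation $(1-(k-o+1)\hat{p})^{N-k}\sim e^{-(k-o+1)}$ valid under $N\ge k^{2}$, the conditional factor reduces up to lower-order corrections to $o'\,e^{-(k-o+1)}(k-o)^{g-1}/g!$. Stirling on $g!$ together with $((k-o)/g)^{g}=(1+o'/g)^{g}$, expanded via \eqref{technical4} of Lemma \ref{technical_lemma_moderate_input_1} to $e^{o'-o'^{2}/(2g)}$, collapses everything to
\begin{equation}
P\bigl(A_{N,p}(\tau_{\phi})=k \mid A_{N,p}=o\bigr) \;\sim\; \frac{o\phi\,e^{-1}}{(k-o)\sqrt{2\pi g}}\;\exp\!\Bigl(-\tfrac{o^{2}\phi^{2}}{2g}\Bigr).
\end{equation}
The marginal factor $P(A_{N,p}=o)\sim(2\pi)^{-1/2}o^{-3/2}$ is provided by Lemma \ref{technical_lemma_moderate_input_2}, whose hypothesis $N\ge o^{2}$ is implied by $N\ge k^{2}$ and $o\le k$. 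Multiplying gives
\begin{equation}
P^{E,\text{med}}_{\phi}(A_{N,p}=k) \;\sim\; \frac{\phi\,e^{-1}}{2\pi}\;\sum_{o=1}^{k/(1+\phi)}\frac{1}{\sqrt{o}\,(k-o)\sqrt{g}}\;\exp\!\Bigl(-\tfrac{o^{2}\phi^{2}}{2g}\Bigr).
\end{equation}

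This is a Gaussian-type sum whose mass concentrates on $o$ of order $\sqrt{k}/\phi$, which is $o(k)$ whenever $\phi\gg k^{-1/2}$; the theorem's hypothesis $\phi>(\log k)^{-1/2}$ is comfortably stronger. Over the effective support both $k-o$ and $g$ equal $(1+o(1))k$, and the substitution $u=o\phi/\sqrt{k}$ turns the sum into $\phi^{-1/2}k^{-5/4}\int_{0}^{\infty}u^{-1/2}e^{-u^{2}/2}\,du$ up to $(1+o(1))$, producing the claimed $k^{-5/4}$ scaling with constants $D_{1},D_{2}$ of order $\sqrt{\phi}$. The upper bound follows by extending the sum to $\infty$; the matching lower bound by restricting to $o\in[\varepsilon\sqrt{k}/\phi,\sqrt{k}/\phi]$, where the Gaussian factor and both polynomial factors are bounded below by positive constants. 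The main technical obstacle is the uniform control of the multiplicative error terms from Lemmas \ref{technical_lemma_moderate_input_1} and \ref{technical_1_1} over the entire summation range; the condition $\phi>(\log k)^{-1/2}$ enters precisely because it places the dominant scale $o\sim\sqrt{k}/\phi$ well inside the window $o\le\sqrt{k}\log k$ in which the sharp equivalence \eqref{technical4} applies — without it one only has the one-sided bound \eqref{technical3}, which yields $D_{1}$ but not the matching $D_{2}$. Confirming that the $\mathcal{O}(\log k/\sqrt{k})$ corrections propagate through the Gaussian integral without perturbing the leading constants, and that the endpoint $o\approx k/(1+\phi)$ is exponentially suppressed relative to the bulk, are the remaining routine bookkeeping steps.
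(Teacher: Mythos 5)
Your proposal is correct and follows essentially the same route as the paper's proof: condition on the pre-avalanche size $o$, reduce the summand via \eqref{technical_2} and \eqref{technical4} to $\frac{\phi}{2\pi e}\,o^{-1/2}(k-o)^{-3/2}e^{-\phi^{2}o^{2}/(2(k-o))}$, and evaluate the resulting Gaussian-type sum concentrating at $o\sim\sqrt{k}/\phi$ to extract the $k^{-5/4}$ scaling with constants of order $\sqrt{\phi}$. Your reading of the hypothesis $\phi>(\log k)^{-1/2}$ — placing the dominant scale inside the window where the two-sided asymptotic \eqref{technical4} holds and suppressing the tail $o>\sqrt{k}\log k$ — matches exactly how the paper uses it.
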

\begin{proof}
Here is the proof of Theorem \ref{med_input_form2_main_asymptotic_result}. 
  In  light of \eqref{technical_2}, one finds :
\begin{align*}
P(\;  A_{N,p}(\tau_{\phi}(\omega))=k |\;  A_{N,p} (\omega)=o )\quad =  \quad \\
  {{N-o}\choose{k-o}}\;\hat{p}^{k-o}\;(1-(k-o+1)\hat{p})^{N-k} (k-o)^{k-1-o}\frac{(\phi o)(k-o)!}{(k-o-o\phi)! (k-o)^{o\phi}} .
\end{align*}
Now it can be shown that $P( A_{N,p} =o) \sim \frac{o^{-1.5}}{\sqrt{2\pi}}$. 
Thus,
 $$P\bigg(\;  A_{N,p}(\tau_{\phi}(\omega))=k ,\;  A_{N,p} (\omega)=o \;\bigg)= P(\;  A_{N,p}(\tau_{\phi}(\omega))=k |\;  A_{N,p} (\omega)=o \;) \frac{o^{-1.5}}{\sqrt{2\pi}}.$$
Using the Euler–Maclaurin formula for expressing sums as integrals, we get
\begin{align}\label{zeroth_part_big_result}
P^{E,\text{med}}_{\phi}( A_{N,p}=k)= & \int_{1}^{\frac{k}{1+\phi}} P\bigg(\;  A_{N,p}(\tau_{\phi}(\omega))=k ,\;  A_{N,p} (\omega)=o \bigg)do \nonumber  \\
&  +\phi \frac{1}{\sqrt{2 \pi }e^2}  k^{-1.5} + (1+\phi)^{1.5}k^{-1.5}\frac{e^{-\frac{k\phi}{1+\phi}}}{\sqrt{2 \pi }e}+o(k^{-1.5})O(\phi).
\end{align}

Now using \eqref{technical4}
\begin{align}\label{first_part_big_result}
\int_{1}^{\frac{\sqrt{k}}{\log k}} P\bigg(\;  A_{N,p}(\tau_{\phi}(\omega))=k ,\;  A_{N,p} (\omega)=o \;\bigg)do  &\sim\frac{\phi}{e 2\pi}\int_{1}^{\frac{\sqrt{k}}{\log k}} (k-o)^{-1.5} o^{-.5} do  \nonumber \\
& \sim \frac{\phi}{e\pi} \bigg(\frac{k^{-1.25}}{(\log k)^{.5}}-k^{-1.5} \bigg).
\end{align}

Observe that using the fact $\phi > (\log k)^{-.5}$, for  some positive constant $C^{(1)}$, we have,
\begin{align*}
\int_{\sqrt{k} \log k}^{\frac{k}{1+\phi}} P\bigg(\;  A_{N,p}(\tau_{\phi}(\omega))=k ,\;  A_{N,p} (\omega)=o \;\bigg)do \\ \le C^{(1)}\phi\int_{\sqrt{k} \log k}^{\frac{k}{1+\phi}}(k-o)^{-1.5} o^{-.5} e^{\frac{-(\phi o)^2}{2(k-o)}}do \\ 
\le C^{(1)} \phi   e^{-(\phi \log k)^2}\int_{\sqrt{k} \log k}^{\frac{k}{1+\phi}} o^{-.5}(k-o)^{-1.5} do  \le \phi \; C^{(1)}  e^{-(\phi \log k)^2} k^{-.75} \sim o(k^{-1.25}).
\end{align*}
Thus 
\begin{equation}\label{second_part_big_result}
\int_{\sqrt{k} \log k}^{\frac{k}{1+\phi}} P\bigg(\;  A_{N,p}(\tau_{\phi}(\omega))=k ,\;  A_{N,p} (\omega)=o \;\bigg)do= o(k^{-1.25}) .
\end{equation}

Now, Define $A_k= \frac{\phi^2}{2(k-\frac{\sqrt{k}}{\log k})},\; B_k=\frac{\phi^2}{2(k-\sqrt{k}\log k)} $, using \eqref{technical4} we get
\begin{align}\label{two_sided_bounds_for_probability_density}
\frac{\phi}{2\pi e}(k-\sqrt{k} \log k)^{-1.5} Q_b \quad \ge \quad &\int_{\frac{\sqrt{k}}{\log k}}^{\sqrt{k} \log k} P\bigg(\;  A_{N,p}(\tau_{\phi}(\omega))=k ,\;  A_{N,p} (\omega)=o \;\bigg)do \quad \ge \\ &\frac{\phi}{2\pi e}(k-\frac{\sqrt{k}}{\log k})^{-1.5} Q_s,
\end{align}
where
\begin{align*}
&Q_s= \int_{\frac{\sqrt{k}}{\log k}}^{\sqrt{k} \log k} o^{-.5} e^{-o^2 B_k} do \sim  \frac{1}{2 B_k^{.25}}\int_{\frac{\phi^2}{2}}^{\infty} t^{-.75}e^{-t}dt,\\
&Q_b= \int_{\frac{\sqrt{k}}{\log k}}^{\sqrt{k} \log k} o^{-.5} e^{-o^2 A_k} do \sim  \frac{1}{2 A_k^{.25}}\int_{\frac{\phi^2}{2}}^{\infty} t^{-.75}e^{-t}dt.
\end{align*}

Thus we arrive at 
\begin{equation}\label{third_part_big_result}
\int_{\frac{\sqrt{k}}{\log k}}^{\sqrt{k} \log k} P\bigg(\;  A_{N,p}(\tau_{\phi}(\omega))=k ,\;  A_{N,p} (\omega)=o \;\bigg)do \sim  D_\phi k^{-1.25}, \; 
\end{equation}
with $D_\phi = \frac{\sqrt{\phi}}{2^{1.75}\pi e}\int_{\frac{\phi^2}{2}}^{\infty} t^{-.75}e^{-t}dt$.
Using \eqref{zeroth_part_big_result}, \eqref{first_part_big_result}, \eqref{second_part_big_result} and \eqref{third_part_big_result}, one arrives at 
$$
P^{E,\text{med}}_{\phi}( A_{N,p}=k)= D_\phi k^{-1.25} +\frac{\phi}{e\pi} \bigg(\frac{k^{-1.25}}{(\log k)^{.5}}-k^{-1.5} \bigg) + \phi \frac{1}{\sqrt{2 \pi }e^2}  k^{-1.5} + o(k^{-1.5}).
$$

\end{proof}
\begin{remark}
The conditions enforced on $(N, \phi, k)$ in Theorem \eqref{med_input_form2_main_asymptotic_result} are sufficient but not necessary. For example the condition $N> k^2$ is used to ensure terms like $e^{\frac{o\phi}{N-o-1}}$ are equal to $1$. The milder condition $N> o^2$, would suffice for this.
\end{remark}

\textbf{\underline{Cutoff at $k> \phi^{-2}$}}\newline
Consider $\phi$ is small but fixed.  We take $k= \phi^{-\delta}$, i.e $k^{-\frac{1}{\delta}}= \phi$. A rough simplification of the proof for Theorem \eqref{med_input_form2_main_asymptotic_result} shows that the term $P^{E,\text{med}}_{\phi}( A_{N,p}=k)$ is computed as,
\begin{align}\label{between_Ranges_orginal}
\begin{aligned}
P^{E,\text{med}}_{\phi}( A_{N,p}=k) \quad \sim \quad \phi\int_{1}^{\frac{k}{1+\phi}}(k-o)^{-1.5} o^{-.5} e^{\frac{-(\phi o)^2}{2(k-o)}}do  \\
\sim \phi\bigg[{\displaystyle\int_{1}^{\sqrt{k}}}(k-o)^{-1.5} o^{-.5} do+{\displaystyle\int_{\sqrt{k}}^{\frac{k}{1+\phi}}}(k-o)^{-1.5} o^{-.5} e^{\frac{-(\phi o)^2}{2(k-o)}}do\bigg] \\
 \sim\; C \phi k^{-1.25}.
 \end{aligned}
\end{align}
The reason for breaking up the main integral in \eqref{between_Ranges_orginal} is that when $k> e^{\frac{1}{\phi}}$, and $o> \sqrt{k}$, one observes
\begin{align}\label{between_Ranges_key_conditions}
 e^{\frac{-(\phi o)^2}{2(k-o)}}\quad\sim \quad 0.
\end{align}
With $k= \phi^{-\delta}$, \eqref{between_Ranges_key_conditions} is no longer true. To account for this  we define $\delta'=\frac{1}{2}+ \frac{1}{\delta} + \delta'''$, with  any $\delta'''>0$ satisfying $ \frac{1}{\delta} \gg \delta'''$. We also need to ensure $\delta > 2$. Such a setup  allows for \eqref{between_Ranges_orginal} to be replaced with
\begin{align}\label{between_Ranges_1}
\begin{aligned}
P^{E,\text{med}}_{\phi}( A_{N,p}=k) \quad \sim \quad \phi\int_{1}^{\frac{k}{1+\phi}}(k-o)^{-1.5} o^{-.5} e^{\frac{-(\phi o)^2}{2(k-o)}}do  \\
\sim \phi\bigg[{\displaystyle\int_{1}^{k^{\delta'}}}(k-o)^{-1.5} o^{-.5} do+{\displaystyle\int_{k^{\delta'}}^{\frac{k}{1+\phi}}}(k-o)^{-1.5} o^{-.5} e^{\frac{-(\phi o)^2}{2(k-o)}}do\bigg] \\
 \sim\; \phi k^{-1.25+\frac{1}{2\delta}}\; \sim\; k^{-1.25-\frac{1}{2\delta}}.
 \end{aligned}
\end{align}
This rough calculation shows that $\delta= 2$ is where the distribution departs from a $1.5$ power law.

\end{subsection}

\begin{subsection}{Large and small input regimes}\label{Large and Small input regimes}
Suppose we have a $(N,p)$-BB space. 

 With $\lambda \le N$ an integer parameter we will define a new random variable $X_{N,p,\lambda}$. For a configuration $\omega$, a second configuration $\omega'$ is defined by the following procedure.
 When $j \le \lambda$,
\begin{align*}
a_{M,j}\left(\omega'\right)\;=1\quad \& \quad  a_{ij}\left(\omega'\right)\;=0, \; \forall \; i <M.
\end{align*}
When $j > \lambda $, set $a_{i,j}\left(\omega'\right)\;=\; a_{i,j}(\omega)$, for all $i$.\newline

 Define $X_{N,p,\lambda}(\omega)= A_{N,p}(\omega')- \lambda $. We can derive the following using Theorem \eqref{Basic_enumeration_theorem_moderate_input} :
\begin{thm}\label{Basic_enumeration_theorem_large_input}
$P(X_{N,p,\lambda}=k)= {N-\lambda \choose k} p^{k}(\lambda+1)(k+\lambda+1)^{k-1}
(1-(k+\lambda +1)p)^{N-\lambda-k}$
\end{thm}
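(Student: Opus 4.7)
The plan is to reduce $X_{N,p,\lambda}$ to a random variable already understood via Theorem \ref{Basic_enumeration_theorem_moderate_input}, since the target formula is literally that theorem with $N$ replaced by $N-\lambda$. So I would look for a distributional identity showing that $X_{N,p,\lambda}$ on the $(N,p)$ BB space has the same law as the avalanche random variable on the smaller $(N-\lambda, p)$ BB space under $P_\lambda$.

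The key observation is that, under the uniform measure $P$, the $N-\lambda$ unmodified coordinates of $\omega$ are iid uniform on $\{1,\ldots,M\}$ and are independent of whatever is done to the first $\lambda$ coordinates. In the construction of $\omega'$, the first $\lambda$ neurons are pinned to level $M$ and fire as the initial stage of the avalanche, contributing exactly $\lambda$ to the total size. By the Abelian property of the avalanche dynamics invoked for the LM, we can process these $\lambda$ firings first before anything else. When they fire, each of the remaining $N-\lambda$ neurons receives an energy increment of $\lambda$: a free neuron with original level $E_j \ge M-\lambda$ is pushed to level $M$ and joins the cascade, while one with $E_j < M-\lambda$ ends up at level $E_j + \lambda$. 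This post-firing state is exactly the image of the free subsystem under $T_\lambda$, but now interpreted as acting on the $(N-\lambda, p)$ BB space.

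Consequently, $X_{N,p,\lambda} = A_{N,p}(\omega') - \lambda$, the avalanche contribution after the forced firings, is distributed as $A_{N-\lambda,p}$ under the pushforward measure $P_\lambda$ on the smaller space. Substituting $N \mapsto N-\lambda$ into Theorem \ref{Basic_enumeration_theorem_moderate_input} then yields
\begin{equation*}
P(X_{N,p,\lambda}=k) = (\lambda+1){N-\lambda\choose k} p^{k}\bigl(1-(k+\lambda+1)p\bigr)^{N-\lambda-k}(k+\lambda+1)^{k-1},
\end{equation*}
which is exactly the claimed identity.

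The main thing to verify carefully is the equivalence between the free-neuron dynamics inside $\omega'$ and the $T_\lambda$ construction: one must check that the first $\lambda$ neurons, once fired, play no further role (they are refractory), and that there is no double-counting when some of the free neurons happen to be at level $M$ already. Both are direct consequences of the model's rules, but writing them out cleanly in terms of the $\mathcal{F}_{\text{ire}}$ / $\mathcal{NF}_{\text{ire}}$ decomposition used earlier is the one place where a small bookkeeping step is still required before the derivation collapses to a one-line application of the prior theorem.
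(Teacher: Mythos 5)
Your proposal is correct and matches the paper's approach: the paper derives this theorem precisely by invoking Theorem \ref{Basic_enumeration_theorem_moderate_input} with $N$ replaced by $N-\lambda$, which is exactly your reduction. Your identification of $X_{N,p,\lambda}$ with $A_{N-\lambda,p}$ under $P_\lambda$ on the smaller space (via the Abelian property and the refractoriness of the $\lambda$ forced neurons) supplies the bookkeeping the paper leaves implicit, and it checks out.
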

\textbf{\underline{Small input regime}}
Here we choose $\lambda = \lambda_{0}$, where $\lambda_0$ is some constant independent of $N$. Using Theorem \eqref{Basic_enumeration_theorem_large_input}, and Stirling's formula  (like with the avalanche distribution) we show that as $N$ and $k$ grow to infinity with $\frac{k}{N} \to 0$,  $P(X_{N,p,\lambda}=k) = \Theta (k^{-1.5})$.

\textbf{\underline{Large input regime}}
Let us now put $\lambda=\hat{\lambda}_N= \lambda_{0}\times N$, $\lambda_{0}< 1$. We will also demand that $\alpha \times (1+\lambda_0) < 1$. This implies there is massive external input during firing that forces a proportion of the system to spontaneously fire.
Now observe that $X_{N,p,\hat{\lambda}_{N}}$ has the distribution of Quasi Binomial 1 distribution (see \cite{consul1974simple}) with $n=(1-\lambda_0)N,\; a=(1+\hat{\lambda}_N)p, \; \theta=p,\; b=1-n\theta-a$. Again it is known as $n \rightarrow +\infty$, a Quasi Binomial 1 distribution approaches the Generalized Poisson distribution (\cite{consul1973generalization}), which is a type of Lagrangian distribution (\cite{consul1972use}, \cite{consul2006lagrangian}). It has further been established that Lagrangian random variables approach the standard normal distribution under certain conditions (see \cite{consul1973some}). All of these together lead to the following theorem.
\begin{thm}
$\frac{X_{N,\frac{\alpha}{N},\hat{\lambda}_{N}}-\mu_{N}}{\sigma_{N}}$ converges in distribution to a standard normal variable as $N$ goes to $\infty$, where $$\mu_{N}=\frac{\alpha (1+\lambda_0 N)(1-\lambda_0)}{1- \alpha(1-\lambda_0)}, \quad (\sigma_{N})^{2}=\frac{\alpha (1+\lambda_0 N)(1-\lambda_0)}{(1- \alpha(1-\lambda_0))^{3}}.$$
\end{thm}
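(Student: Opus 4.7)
The plan is to follow the roadmap sketched in the statement: identify the exact distribution of $X_{N,p,\hat\lambda_N}$ with a member of a well-studied parametric family, pass to a continuous limit, and then invoke a known central limit theorem for that family. First I would start from the closed form given by Theorem B.\ref{Basic_enumeration_theorem_large_input},
$$P(X_{N,p,\hat\lambda_N}=k) = \binom{N-\hat\lambda_N}{k}\,p^{k}\,(\hat\lambda_N+1)\,(k+\hat\lambda_N+1)^{k-1}\bigl(1-(k+\hat\lambda_N+1)p\bigr)^{N-\hat\lambda_N-k},$$
and algebraically rearrange it to match the PMF of Consul's quasi-binomial distribution of the first kind \cite{consul1974simple}. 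With the identifications $n=(1-\lambda_0)N$, $a=(1+\hat\lambda_N)p$, $\theta=p$, $b=1-n\theta-a$ indicated in the text, this is a purely bookkeeping verification: collect the $(\hat\lambda_N+1)$, combine it with $p$ inside the $(k+\hat\lambda_N+1)^{k-1}$ factor, and check that the exponent of $(1-a-kb)$ matches after substituting $b$.

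Second, I would take $n\to\infty$ while keeping $n\theta \to \alpha(1-\lambda_0)$ and $a \to \alpha(1+\lambda_0 N)(1-\lambda_0)/N$ at the right rate. This is precisely the scaling under which a QBD-I converges to a Generalized Poisson distribution \cite{consul1973generalization}, whose mean and variance equal
$$\mathbb{E}[X] = \frac{\theta_1}{1-\theta_2}, \qquad \mathrm{Var}(X) = \frac{\theta_1}{(1-\theta_2)^{3}},$$
with $\theta_1 = \alpha(1+\lambda_0 N)(1-\lambda_0)$ and $\theta_2 = \alpha(1-\lambda_0)$. Plugging these in reproduces the target $\mu_N$ and $\sigma_N^{2}$, so no further moment computation is needed once the identification is justified. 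The constraint $\alpha(1+\lambda_0)<1$ in the statement is exactly what ensures $\theta_2<1$, so that the GPD is well-defined and the moment formulas converge.

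Third, since the GPD lies in the Lagrangian family \cite{consul1972use,consul2006lagrangian}, I would invoke the Lagrangian CLT of \cite{consul1973some}: a Lagrangian random variable with diverging principal parameter $\theta_1$, rescaled by its mean and standard deviation, converges in distribution to $\mathcal{N}(0,1)$. Here $\theta_1=\Theta(N)\to\infty$, so the hypothesis is met and the conclusion follows.

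The main obstacle is gluing together two separate limits — first the discrete QBD-I converging to the GPD (whose own parameters drift with $N$), and then the GPD converging to a Gaussian after standardization. A naive composition of "convergence in distribution" is not legitimate when the limiting family itself depends on $N$. I would circumvent this by working directly with characteristic functions: write down the probability generating function of $X_{N,p,\hat\lambda_N}$ using the Lagrange inversion identity implicit in the $(k+\hat\lambda_N+1)^{k-1}$ factor (this is the standard Borel-type tree generating function, cf.\ the proof of Theorem \ref{Basic_enumeration_theorem_trees}), substitute $t/\sigma_N$ for the dummy variable after centering by $\mu_N$, and expand to second order. The pointwise limit $e^{-t^{2}/2}$ then emerges, and Lévy's continuity theorem delivers the claim without having to invoke the intermediate GPD limit as a separate convergence statement.
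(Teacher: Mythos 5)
Your proposal follows exactly the paper's route: identify $X_{N,p,\hat\lambda_N}$ as a Quasi Binomial~1 distribution with the stated parameters, pass to the Generalized Poisson limit of \cite{consul1973generalization}, and invoke the Lagrangian normal limit of \cite{consul1973some} --- indeed, the paper's entire ``proof'' consists of precisely that citation chain and nothing more. The concern you raise in your final paragraph about composing two limits whose target family itself drifts with $N$ is legitimate and is not addressed in the paper at all, so the direct generating-function argument you sketch there would actually be \emph{more} rigorous than what the authors provide (though note that the quasi-binomial PMF, unlike the Borel--Tanner/GPD case, does not admit a clean Lagrange-inversion PGF, so executing that step is less routine than your sketch suggests).
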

\end{subsection}
\end{section}
\bibliographystyle{plain}
\bibliography{soc_bib}

\begin{thebibliography}{10}

\bibitem{Bak1987}
P.~Bak, C.~Tang, and K.~Wiesenfeld.
\newblock Self-organized criticality: an explanation of $1/f$ noise.
\newblock {\em Phys. Rev. Lett.}, 59:381--384, 1987.

\bibitem{Berard_2006}
C.~Bedard, H.~Kroeger, and A.~Destexhe.
\newblock Does the 1/f frequency-scaling of brain signals reflect
  self-organized critical states?
\newblock {\em Phys. Rev. Lett.}, 97:118102, 2006.

\bibitem{Beggs2003}
J.~Beggs and D.~Plenz.
\newblock Neuronal avalanches in neocortical circuits.
\newblock {\em J. Neurosci}, 23:11167--11177, 2003.

\bibitem{beggs2012being}
John~M Beggs and Nicholas Timme.
\newblock Being critical of criticality in the brain.
\newblock {\em Front. Physiol.}, 3:163, 2012.

\bibitem{bertschinger2004}
Nils Bertschinger and Thomas Natschl\"ager.
\newblock Real-time computation at the edge of chaos in recurrent neural
  networks.
\newblock {\em Neural Comput.}, 16(7):1413--1436, 2004.

\bibitem{boedecker2012}
Joschka Boedecker, Oliver Obst, Joseph~T. Lizier, N.~Michael Mayer, and Minoru
  Asada.
\newblock Information processing in echo state networks at the edge of chaos.
\newblock {\em Theory Biosci.}, 131(3):205--213, 2012.

\bibitem{Bonachela2009}
Juan~A. Bonachela and Miguel~A. Mu{\~n}oz.
\newblock Self-organization without conservation: true or just apparent
  scale-invariance?
\newblock {\em J. Stat. Mech. Theory Exp.}, 2009(09):P09009, 2009.

\bibitem{bonamy2008crackling}
Daniel Bonamy, St{\'e}phane Santucci, and Laurent Ponson.
\newblock Crackling dynamics in material failure as the signature of a
  self-organized dynamic phase transition.
\newblock {\em Phys. Rev. Lett.}, 101(4):045501, 2008.

\bibitem{caldarelli1996self}
Guido Caldarelli, Francesco~D Di~Tolla, and Alberto Petri.
\newblock Self-organization and annealed disorder in a fracturing process.
\newblock {\em Phys. Rev. Lett.}, 77(12):2503, 1996.

\bibitem{christensen2005complexity}
Kim Christensen and Nicholas~R Moloney.
\newblock {\em Complexity and criticality}, volume~1.
\newblock World Scientific Publishing Company, 2005.

\bibitem{consul1974simple}
PC~Consul.
\newblock A simple urn model dependent upon predetermined strategy.
\newblock {\em Sankhy{\=a}: The Indian Journal of Statistics, Series B}, pages
  391--399, 1974.

\bibitem{consul1972use}
PC~Consul and LR~Shenton.
\newblock Use of lagrange expansion for generating discrete generalized
  probability distributions.
\newblock {\em SIAM J. Appl. Math.}, 23(2):239--248, 1972.

\bibitem{consul2006lagrangian}
Prem~C Consul and Felix Famoye.
\newblock {\em Lagrangian probability distributions}.
\newblock Springer, 2006.

\bibitem{consul1973generalization}
Prem~C Consul and Gaurav~C Jain.
\newblock A generalization of the poisson distribution.
\newblock {\em Technometrics}, 15(4):791--799, 1973.

\bibitem{consul1973some}
Prem~C Consul and LR~Shenton.
\newblock Some interesting properties of lagrangian distributions.
\newblock {\em Commun. Stat. Theory Methods.}, 2(3):263--272, 1973.

\bibitem{branching}
Mauro Copelli and Osame Kinouchi.
\newblock Optimal dynamical range of excitable networks at criticality.
\newblock {\em Nat. Phys.}, 2(5):348--351, 2006.

\bibitem{Arcangelis06}
L.~de~Arcangelis, C.~Perrone-Capano, and H.~J. Herrmann.
\newblock Self-organized criticality model for brain plasticity.
\newblock {\em Phys. Rev. Lett.}, 96:028107(4), 2006.

\bibitem{Denker2016}
Manfred Denker and Anna Levina.
\newblock Avalanche dynamics.
\newblock {\em Stochastics and Dynamics}, 16(02):1660005, 2016.

\bibitem{denker2014ergodicity}
Manfred Denker and Ana Rodrigues.
\newblock Ergodicity of avalanche transformations.
\newblock {\em Dynamical Systems}, 29(4):517--536, 2014.

\bibitem{diSanto2018landau}
Serena di~Santo, Pablo Villegas, Raffaella Burioni, and Miguel~A Mu{\~n}oz.
\newblock Landau--ginzburg theory of cortex dynamics: Scale-free avalanches
  emerge at the edge of synchronization.
\newblock {\em Proc. Natl. Acad. Sci. USA}, 115(7):E1356--E1365, 2018.

\bibitem{Dikman2000}
R.~Dickman, M.~A. Mu\~noz, A.~Vespignani, and S.~Zapperi.
\newblock Paths to self-organized criticality.
\newblock {\em Braz. J. Phys.}, 30:27 -- 41, 03 2000.

\bibitem{dickman1998self}
Ronald Dickman, Alessandro Vespignani, and Stefano Zapperi.
\newblock Self-organized criticality as an absorbing-state phase transition.
\newblock {\em Phys. Rev. E}, 57(5):5095, 1998.

\bibitem{Eurich2002}
C.~W. Eurich, M.~Herrmann, and U.~Ernst.
\newblock Finite-size effects of avalanche dynamics.
\newblock {\em Phys. Rev. E}, 66:066137--1--15, 2002.

\bibitem{Eurich_2002}
C.~W. Eurich, M.~Herrmann, and U.~Ernst.
\newblock Finite-size effects of avalanche dynamics.
\newblock {\em Phys. Rev. E}, 66:066137--1--15, 2002.

\bibitem{friedman2012universal}
Nir Friedman, Shinya Ito, Braden~AW Brinkman, Masanori Shimono, RE~Lee DeVille,
  Karin~A Dahmen, John~M Beggs, and Thomas~C Butler.
\newblock Universal critical dynamics in high resolution neuronal avalanche
  data.
\newblock {\em Phys. Rev. Lett.}, 108(20):208102, 2012.

\bibitem{Friedman2012}
Nir Friedman, Shinya Ito, Braden~AW Brinkman, Masanori Shimono, RE~Lee DeVille,
  Karin~A Dahmen, John~M Beggs, and Thomas~C Butler.
\newblock Universal critical dynamics in high resolution neuronal avalanche
  data.
\newblock {\em Phys. Rev. Lett.}, 108(20):208102, 2012.

\bibitem{Gutenberg1956}
B.~Gutenberg and C.~F. Richter.
\newblock {\em Ann. Geophys.}, 9:1, 1956.

\bibitem{Haldeman2005}
C.~Haldeman and J.~Beggs.
\newblock Critical branching captures activity in living neural networks and
  maximizes the number of metastable states.
\newblock {\em Phys. Rev. Lett.}, 94:058101, 2005.

\bibitem{PhysRevLett.94.058101}
Clayton Haldeman and John~M. Beggs.
\newblock Critical branching captures activity in living neural networks and
  maximizes the number of metastable states.
\newblock {\em Phys. Rev. Lett.}, 94:058101, Feb 2005.

\bibitem{harte2001multifractals}
David Harte.
\newblock {\em Multifractals: theory and applications}.
\newblock CRC Press, 2001.

\bibitem{Herz_1995b}
A.~V.~M. Herz and J.~J. Hopfield.
\newblock Earthquake cycles and neural reverberations: collective oscillations
  in systems with pulse-coupled threshold elements.
\newblock {\em Phys. Rev. Lett.}, 75:1222--1225, 1995.

\bibitem{Kinouchi2006}
O.~Kinouchi and M.~Copelli.
\newblock Optimal dynamical range of excitable networks at criticality.
\newblock {\em Nat. Phys.}, 2:348--352, 2006.

\bibitem{klaus2011statistical}
Andreas Klaus, Shan Yu, and Dietmar Plenz.
\newblock Statistical analyses support power law distributions found in
  neuronal avalanches.
\newblock {\em PloS one}, 6(5):e19779, 2011.

\bibitem{langton1990computation}
Chris~G Langton.
\newblock Computation at the edge of chaos: phase transitions and emergent
  computation.
\newblock {\em Physica D}, 42(1-3):12--37, 1990.

\bibitem{laurson2010avalanches}
Lasse Laurson, Stephane Santucci, and Stefano Zapperi.
\newblock Avalanches and clusters in planar crack front propagation.
\newblock {\em Phys. Rev. E}, 81(4):046116, 2010.

\bibitem{LevinaDiss}
A.~Levina.
\newblock {\em A mathematical approach to self-organized criticality in neural
  networks}.
\newblock Nieders\"achsische Staats-und Universit\"atsbibliothek G\"ottingen,
  2008.

\bibitem{Levina2006b}
A.~Levina, U.~Ernst, and J.~M. Herrmann.
\newblock Criticality of avalanche dynamics in adaptive recurrent networks.
\newblock {\em Neurocomput.}, 70:1877--1881, 2007.

\bibitem{levina2008mathematical}
Anna Levina.
\newblock A mathematical approach to self-organized criticality in neural
  networks.
\newblock {\em Nieders. Staatsu. Universit{\"a}tsbibliothek G{\"o}ttingen.
  Dissertation (Ph. D. thesis)}, 2008.

\bibitem{Levina2014}
Anna Levina and J~Michael Herrmann.
\newblock The abelian distribution.
\newblock {\em Stochastics and Dynamics}, 14(03):1450001, 2014.

\bibitem{Levina2007d}
Anna Levina, J~Michael Herrmann, and Manfred Denker.
\newblock Critical branching processes in neural networks.
\newblock {\em PAMM}, 7(1):1030701--1030702, 2007.

\bibitem{levina2007critical}
Anna Levina, J~Michael Herrmann, and Manfred Denker.
\newblock Critical branching processes in neural networks.
\newblock {\em PAMM}, 7(1):1030701--1030702, 2007.

\bibitem{Levina2017}
Anna Levina and Viola Priesemann.
\newblock {Subsampling scaling}.
\newblock {\em Nat. Commun.}, 8:15140, may 2017.

\bibitem{Martinello2017}
Matteo Martinello, Jorge Hidalgo, Amos Maritan, Serena {Di Santo}, Dietmar
  Plenz, and Miguel~A. Mu{\~{n}}oz.
\newblock {Neutral theory and scale-free neural dynamics}.
\newblock {\em Physical Review X}, 7(4):1--11, 2017.

\bibitem{Millman2010}
Daniel Millman, Stefan Mihalas, Alfredo Kirkwood, and Ernst Niebur.
\newblock Self-organized criticality occurs in non-conservative neuronal
  networks during 'up' states.
\newblock {\em Nature physics}, 6(10):801--805, 2010.

\bibitem{moon1967various}
John~W Moon.
\newblock Various proofs of cayley's formula for counting trees.
\newblock In {\em A seminar on Graph Theory}, number s 70, page~78, 1967.

\bibitem{mossa2002truncation}
Stefano Mossa, Marc Barthelemy, H~Eugene Stanley, and Luis A~Nunes Amaral.
\newblock Truncation of power law behavior in ``scale-free'' network models due
  to information filtering.
\newblock {\em Phys. Rev. Lett.}, 88(13):138701, 2002.

\bibitem{Petermann2009}
Thomas Petermann, Tara~C Thiagarajan, Mikhail~A Lebedev, Miguel~AL Nicolelis,
  Dante~R Chialvo, and Dietmar Plenz.
\newblock Spontaneous cortical activity in awake monkeys composed of neuronal
  avalanches.
\newblock {\em Proc. Natl. Acad. Sci. USA}, 106(37):15921--15926, 2009.

\bibitem{Priesemann2013}
Viola Priesemann, Mario Valderrama, Michael Wibral, and Michel Le~Van~Quyen.
\newblock Neuronal avalanches differ from wakefulness to deep sleep--evidence
  from intracranial depth recordings in humans.
\newblock {\em PLoS Comput. Biol.}, 9(3):e1002985, 2013.

\bibitem{Priesemann2014}
Viola Priesemann, Michael Wibral, Mario Valderrama, Robert Pr{\"o}pper, Michel
  Le~Van~Quyen, Theo Geisel, Jochen Triesch, Danko Nikoli{\'c}, and Matthias
  Hans~Joachim Munk.
\newblock Spike avalanches in vivo suggest a driven, slightly subcritical brain
  state.
\newblock {\em Front. Syst. Neurosci.}, 8:108, 2014.

\bibitem{pruessner2012self}
Gunnar Pruessner.
\newblock {\em Self-organised criticality: theory, models and
  characterisation}.
\newblock Cambridge University Press, 2012.

\bibitem{scarpetta2014alternation}
Silvia Scarpetta and Antonio de~Candia.
\newblock Alternation of up and down states at a dynamical phase-transition of
  a neural network with spatiotemporal attractors.
\newblock {\em Front. Syst. Neurosci.}, 8:88, 2014.

\bibitem{Shew2015}
Woodrow~L Shew, Wesley~P Clawson, Jeff Pobst, Yahya Karimipanah, Nathaniel~C.
  Wright, and Ralf Wessel.
\newblock {Adaptation to sensory input tunes visual cortex to criticality}.
\newblock {\em Nat. Phys.}, 11(June):22--27, 2015.

\bibitem{Shew2009}
Woodrow~L Shew, Hongdian Yang, Thomas Petermann, Rajarshi Roy, and Dietmar
  Plenz.
\newblock {Neuronal Avalanches Imply Maximum Dynamic Range in Cortical Networks
  at Criticality}.
\newblock {\em J. Neurosci.}, 29(49):15595--15600, 2009.

\bibitem{shriki2013neuronal}
Oren Shriki, Jeff Alstott, Frederick Carver, Tom Holroyd, Richard~NA Henson,
  Marie~L Smith, Richard Coppola, Edward Bullmore, and Dietmar Plenz.
\newblock Neuronal avalanches in the resting meg of the human brain.
\newblock {\em J. Neurosci.}, 33(16):7079--7090, 2013.

\bibitem{sornette2006critical}
Didier Sornette.
\newblock {\em Critical phenomena in natural sciences: chaos, fractals,
  selforganization and disorder: concepts and tools}.
\newblock Springer Science \& Business Media, 2006.

\bibitem{tagliazucchi2012criticality}
Enzo Tagliazucchi, Pablo Balenzuela, Daniel Fraiman, and Dante~R Chialvo.
\newblock Criticality in large-scale brain fmri dynamics unveiled by a novel
  point process analysis.
\newblock {\em Front. Physiol.}, 3:15, 2012.

\bibitem{PhysRevE.84.016115}
Mehdi Talamali, Viljo Pet\"aj\"a, Damien Vandembroucq, and St\'ephane Roux.
\newblock Avalanches, precursors, and finite-size fluctuations in a mesoscopic
  model of amorphous plasticity.
\newblock {\em Phys. Rev. E}, 84:016115, Jul 2011.

\bibitem{Uhlig2013}
Maximilian Uhlig, Anna Levina, Theo Geisel, and J~Michael Herrmann.
\newblock Critical dynamics in associative memory networks.
\newblock {\em Front. Comp. Neurosci.}, 7, 2013.

\bibitem{watkins201625}
Nicholas~W Watkins, Gunnar Pruessner, Sandra~C Chapman, Norma~B Crosby, and
  Henrik~J Jensen.
\newblock 25 years of self-organized criticality: concepts and controversies.
\newblock {\em Space Science Reviews}, 198(1-4):3--44, 2016.

\bibitem{williams2017unveiling}
Rashid~V Williams-Garc{\'\i}a, John~M Beggs, and Gerardo Ortiz.
\newblock Unveiling causal activity of complex networks.
\newblock {\em EPL}, 119(1):18003, 2017.

\bibitem{wilting2016branching}
Jens Wilting and Viola Priesemann.
\newblock Inferring collective dynamical states from widely unobserved systems.
\newblock {\em Nat. Commun.}, 9(1):2325, 2018.

\bibitem{yaghoubi2018neuronal}
Mohammad Yaghoubi, Ty~de~Graaf, Javier~G Orlandi, Fernando Girotto, Michael~A
  Colicos, and J{\"o}rn Davidsen.
\newblock Neuronal avalanche dynamics indicates different universality classes
  in neuronal cultures.
\newblock {\em Sci.Rep.}, 8(1):3417, 2018.

\bibitem{yu2017maintained}
Shan Yu, Tiago~L Ribeiro, Christian Meisel, Samantha Chou, Andrew Mitz, Richard
  Saunders, and Dietmar Plenz.
\newblock Maintained avalanche dynamics during task-induced changes of neuronal
  activity in nonhuman primates.
\newblock {\em eLife}, 6:e27119, 2017.

\end{thebibliography}
\end{document}